%
%
%
%
%
%

\documentclass[11pt]{article}

\usepackage{amsmath, amssymb, graphicx, enumerate, url, wrapfig}
\urlstyle{same}
\usepackage{amsthm,fullpage}
\usepackage{hyperref}
\usepackage{enumitem}
\usepackage{xspace}
\usepackage{caption}
\usepackage{morefloats}

\theoremstyle{plain}
\newtheorem{theorem}{Theorem}[section]

\theoremstyle{definition}

\newtheorem{remark}[theorem]{Remark}


\title{Classic Nintendo Games are \\ (Computationally) Hard}

\author{%
	Greg Aloupis%
    \thanks{Charg\'{e} de Recherches du FNRS, D\'{e}partement d'Informatique,
    Universit\'{e} Libre de Bruxelles,
\protect\url{aloupis.greg@gmail.com}.
Work initiated while at Institute of Information Science, Academia Sinica.}
	\and
	Erik D. Demaine%
    \thanks{MIT Computer Science and Artificial Intelligence Laboratory,
      32 Vassar St., Cambridge, MA 02139, USA, \protect\url{{edemaine,aguo}@mit.edu}}
	\and
	Alan Guo%
    \footnotemark[2]
    \thanks{Partially supported by NSF grants CCF-0829672, CCF-1065125,
    and CCF-6922462.}
\and
Giovanni Viglietta%
    \thanks{School of Electrical Engineering and Computer Science, University of Ottawa, Canada, \protect\url{viglietta@gmail.com}.}
}


\newif\ifabstract
\abstracttrue
\abstractfalse
\newif\iffull
\ifabstract \fullfalse \else \fulltrue \fi

\newcounter{section-preserve}
\newcounter{theorem-preserve}
\newcommand{\blank}[1]{}
\newtoks\magicAppendix
\magicAppendix={}
\newtoks\magictoks
\newif\iflater
\laterfalse
\long\def\later#1{\magictoks={#1}%
  \addtocounter{theorem}{1}%
  \edef\magictodo{\noexpand\magicAppendix={\the\magicAppendix \par
    \the\magictoks%
  }}
  \magictodo}
\long\def\both#1{\magictoks={#1}%
  \edef\magictodo{\noexpand\magicAppendix={\the\magicAppendix \par
    \noexpand\setcounter{theorem-preserve}{\noexpand\arabic{theorem}}%
    \noexpand\setcounter{theorem}{\arabic{theorem}}%
    \noexpand\setcounter{section-preserve}{\noexpand\arabic{section}}%
    \noexpand\setcounter{section}{\arabic{section}}%
        \noexpand\let\noexpand\oldsection=\noexpand\thesection
        \noexpand\def\noexpand\thesection{\thesection}
        \noexpand\let\noexpand\oldlabel=\noexpand\label
        \noexpand\let\noexpand\label=\noexpand\blank
    \the\magictoks%
    \noexpand\setcounter{theorem}{\noexpand\arabic{theorem-preserve}}%
    \noexpand\setcounter{section}{\noexpand\arabic{section-preserve}}%
        \noexpand\let\noexpand\thesection=\noexpand\oldsection
        \noexpand\let\noexpand\label=\noexpand\oldlabel
  }}
  \magictodo
  \the\magictoks}
\def\magicappendix{\latertrue \the\magicAppendix}

\iffull
  \long\def\both#1{#1}
  \let\later=\both
  \def\magicappendix{}
\fi

\pagestyle{plain}
\begin{document}

\maketitle

\vspace*{-0.1in}

\begin{abstract}
We prove NP-hardness results for five of Nintendo's largest video game
franchises: Mario, Donkey Kong, Legend of Zelda, Metroid, and
Pok\'emon.  Our results apply to generalized versions of Super Mario Bros.\ 1--3, The Lost
Levels, and Super Mario World; Donkey Kong Country 1--3;
all Legend of Zelda games; all
Metroid games; and all Pok\'emon role-playing games.
In addition, we prove PSPACE-completeness of the Donkey Kong Country games
and several Legend of Zelda games.
\end{abstract}

\vspace*{-0.2in}

\section{Introduction}

A series of recent papers have analyzed the computational complexity of
playing many different video games \cite{Lemmings,Platform,Games,Lemmings2},
but the most well-known classic Nintendo games have yet to be
included among these results. In this paper, we analyze some of the
best-known Nintendo games of all time: Mario, Donkey Kong, Legend of
Zelda, Metroid, and Pok\'emon.  We prove that it is NP-hard, and in some
cases PSPACE-hard, to play
generalized versions of most games in these series.
In particular, our NP-hardness results apply to
the NES games Super Mario Bros. 1--3,
Super Mario Bros.: The Lost Levels, and
Super Mario World (developed by Nintendo);
to the SNES games Donkey Kong Country~1--3 (developed by Rare Ltd.);
to all
Legend of Zelda games (developed by Nintendo);%
\footnote{We exclude the Zelda CD-i games by Philips Media,
   which Nintendo does not list as part of the Legend of Zelda series.}
%
to all Metroid games (developed by Nintendo);
and
to all Pok\'emon role-playing games
(developed by Game Freak and Creatures Inc.).%
\footnote{All products, company names, brand names, trademarks, and sprites
  are properties of their respective owners.  Sprites are used here under
  Fair Use for the educational purpose of illustrating mathematical theorems.}
Our PSPACE-hardness results apply to
to the SNES games Donkey Kong Country~1--3,
and to The Legend of Zelda: A Link to the Past.
Some of the aforementioned games are also complete for either NP or PSPACE.
All of these results are new.%
\footnote{A humorous paper
   (\url{http://www.cs.cmu.edu/~tom7/sigbovik/mariox.pdf}) and video
   (\url{http://www.youtube.com/watch?v=HhGI-GqAK9c}) by Vargomax V. Vargomax
   claims that ``generalized Super Mario Bros.\ is NP-complete'', but both
   versions have no actual proof, only nonsensical content.}

For these games, we consider the decision problem of reachability:
given a stage or dungeon, is it possible to reach the goal point $t$
from the start point~$s$?
Our results apply to
generalizations of the games where we \emph{only} generalize the map
size and leave all other mechanics of the games as they are in their
original settings. Most of our NP-hardness proofs are by reduction from
3-SAT, and 
rely on a common construction.
Similarly, our PSPACE-completeness results for Legend of Zelda:
A Link to the Past and Donkey Kong Country games are by a reduction from
True Quantified Boolean Formula (TQBF), and
rely on a common construction (inspired by a metatheorem from~\cite{Games}).
In addition, we show that several Zelda games are PSPACE-complete
by reducing from PushPush-1~\cite{PushPush}.

We can obtain some positive results if we bound the ``memory'' of the game.
For example, recall that in Super Mario Bros.\ everything substantially off
screen resets to its initial state.  Thus, if we generalize the stage size
in Super Mario Bros.\ but keep the screen size constant, then
reachability of the goal can be decided in polynomial time:
the state space is polynomial in size,
so we can simply traverse the entire state space and
check whether the goal is reachable.  Similar results hold for the other
games if we bound the screen size in Donkey Kong Country or the room
size in Legend of Zelda, Metroid, and Pok\'emon. The screen-size bound
is more realistic (though fairly large in practice), while there is no
standard size for rooms in Metroid and Pok\'emon.

\paragraph{Membership in PSPACE.}
Most of the games considered are easy to show belong to PSPACE, because every
game element's behavior is a simple (deterministic) function of the player's
moves. Therefore, we can solve a level by making moves nondeterministically
while maintaining the current game state (which is polynomial), and use that
NPSPACE = PSPACE.

Some other games, such as Legend of Zelda and its sequels, also include
enemies and other game elements that behave pseudorandomly. As long as the
random seed can be encoded in a polynomial number of bits, which is the case
in all reasonable implementations, the problem remains in PSPACE.

In general, we may safely assume that any generalization of a commercial
single-player game is in PSPACE. Indeed, these games are inherently designed
to run smoothly in real time, so they must rely on a game engine that is at
least in PSPACE, if not in P. Again, once a PSPACE game engine is given,
the game can be played nondeterministically, maintaining a polynomial-size
game state and using that NPSPACE = PSPACE.

\paragraph{Game model and glitches.}
We adopt an idealized model of the games in which we assume that
the rules of the games are as (we imagine) the game developers intended
rather than as they are implemented. In particular, we assume the absence
of major game-breaking glitches (for an example of a major game-breaking
glitch, see~\cite{gamebreak}, in which the speed runner ``beats'' Super Mario
World in less than 3 minutes by performing a sequence of seemingly arbitrary
and nonsensical actions, which fools the game into thinking the game is won).
We view these glitches not as inherently part of the game but rather as
artifacts of imperfect implementation.

\paragraph{Organization.}
\iffull
In Section~\ref{s:platform_framework}, we present two general schematics
used in almost all of our NP-hardness and PSPACE-hardness reductions.
We show that, if the basic gadgets in one of the two constructions
are implemented correctly, then the reduction from either 3-SAT or TQBF follows. In
Section~\ref{s:mario}, we prove that generalized Super Mario Bros.\ is
NP-hard by constructing the appropriate gadgets for the construction
given in Section~\ref{s:platform_framework}. In Sections~\ref{s:dkc},
\ref{s:zelda},~\ref{s:metroid}, and~\ref{s:pokemon}, we do the same for generalized
Donkey Kong Country, Legend of Zelda, Metroid, and Pok\'emon, respectively.
In addition, Sections~\ref{s:dkc} and~\ref{s:zelda} show that
the Donkey Kong Country games and some Legend of Zelda games are
PSPACE-complete, again by constructing the appropriate gadgets introduced in Section~\ref{s:platform_framework}.
\fi
\ifabstract
In Section~\ref{s:platform_framework}, we present a general schematic
used in our NP-hardness reductions. We show that, if the basic gadgets
in the construction are implemented correctly, then the reduction from
3-SAT follows. In Section~\ref{s:mario}, we prove that generalized
Super Mario Bros.\ is NP-hard by constructing the appropriate gadgets
for the construction given in Section~\ref{s:platform_framework}.
In Sections~\ref{s:dkc} and \ref{s:zelda}, we prove PSPACE-completeness
for Donkey Kong Country and Legend of Zelda: A Link to the Past respectively.
Proofs of NP-hardness and PSPACE-completeness of other Mario games,
other Donkey Kong Country games, other Zelda games, Metroid games,
and Pok\'emon games are included in the appendix.
\fi

\section{Frameworks for Platform Games}\label{s:platform_framework}
The general decision problem we consider for platform games is
to determine whether it is possible to get from a given start
point to a given goal point.  This is a natural problem because the main
challenge of platform games is to maneuver around enemies and
obstacles in order to reach the end of each stage.

\subsection{Framework for NP-hardness}

We use a general framework for proving the NP-hardness of platform
games, illustrated in Figure~\ref{fig:platform_framework}.
This framework is similar to the NP-hardness proof of PushPush-1 \cite{Push}.
With this framework in hand, we can prove hardness of individual games
by just constructing the necessary gadgets.

\begin{figure}[htbp]
\centering
\includegraphics[width=\linewidth]{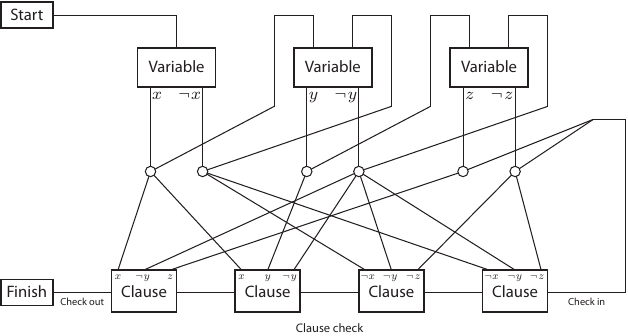}
\caption{General framework for NP-hardness}
\label{fig:platform_framework}
\end{figure}

The framework reduces from the classic NP-complete problem 3-SAT:
decide whether a 3-CNF Boolean formula can be made ``true'' by
setting the variables appropriately.
The player's character starts at the position labeled Start, then
proceeds to the Variable gadgets. Each Variable gadget forces the player
to make an exclusive choice of ``true'' ($x$) or ``false'' ($\neg x$)
value for a variable in the formula.
Either choice enables the player to follow paths leading to Clause gadgets,
corresponding to the clauses containing that literal ($x$~or~$\neg x$).
These paths may cross each other, but Crossover gadgets prevent the player
from switching between crossing paths.
By visiting a Clause gadget, the player can ``unlock'' the clause
(a permanent state change),
but cannot reach any of the other paths connecting to the Clause gadget.
Finally, after traversing through all the Variable gadgets,
the player must traverse a long ``check'' path, which passes through
each Clause gadget, to reach the Finish position.
The player can get through the check path if and only if each clause
has been unlocked by some literal.
Therefore, it suffices to implement Start, Variable, Clause, Finish,
and Crossover gadgets to prove NP-hardness of each platform game.

The specific properties our gadgets must satisfy are the following.

\paragraph{Start and Finish.} The Start and Finish gadgets contain the
starting point and goal for the character, respectively.  In most of our
reductions, these gadgets are trivial, but
in some cases we need the player to be in a certain state throughout
the construction, which we can force by making the Finish accessible only
if the player is in the desired state, and the desired state may be entered
at the Start. For example, in the case of Super Mario Bros., we need Mario
to be big throughout the stage, so we put a Super Mushroom at the start
and a brick at the Finish, which can be broken through only if Mario is big.

\paragraph{Variable.}
Each Variable gadget must force the player to choose one of two paths,
corresponding to the variable $x_i$ or its negation $\neg x_i$ being chosen
as the satisfied literal,
such that once a path is taken, the other path can never be traversed.
Each Variable gadget must be accessible from and only from
the previous Variable gadget, independent of the choice made in the
previous gadget, in such a way that entering from one literal
does not allow traversal back into the negation of the literal.

\paragraph{Clause and Check.} Each Clause gadget must be accessible from 
(and initially, only from) the literal paths
corresponding to the literals appearing in the clause in the original
Boolean formula.  In addition, when the player visits a Clause gadget in this
way, they may perform some action that ``unlocks'' the gadget.
The Check path traverses every Clause gadget in sequence,
and the player may pass through each Clause gadget via
the Check path if and only if the Clause gadget is unlocked.
Thus the Check path can be fully traversed only if all the Variable gadgets
have been visited from literal paths.
If the player traverses the entire Check path, they may access the Finish gadget.

\paragraph{Crossover.} The Crossover gadget must allow traversal via two passages
that cross each other, in such a way that there is no leakage between the two.

\begin{remark}\label{r:xover}
The Crossover gadget only needs to be \emph{unidirectional},
in the sense that each of the two crossing paths needs to be traversed
in only one direction. This is sufficient because, for each path
visiting a clause from a literal, instead of backtracking to the
literal after visiting the clause, we can reroute directly to visit
the next clause, so the player is never required to traverse a
literal path in both directions.
\end{remark}

\begin{remark}\label{r:xover2}
It is safe to further assume in a Crossover gadget that each of the two
crossing paths is traversed at most once,
and that one path is never traversed before the other path
(i.e., if both paths are traversed, the order of traversal is fixed).
This is sufficient because two literal paths either are the two sides of the
same Variable (and hence only one gets traversed),
or they come from different Variables,
in which case the one from the earlier Variable in the
sequence is guaranteed to be traversed before the other
(if it gets traversed at all).
Thus it is safe to have a Crossover gadget, featuring two crossing paths $A$ and~$B$, which after traversing path $B$ allows leakage from $A$ to~$B$.
(However, leakage from $B$ to $A$ must still be prevented.)
\end{remark}

\subsection{Framework for PSPACE-hardness}

For the PSPACE-hardness of Donkey Kong Country and Zelda: A Link to the Past, we apply a modified version of a framework described in~\cite[Metatheorem~4.c]{Games} and~\cite{Lemmings2}.

Our framework reduces from the PSPACE-complete problem True Quantified Boolean Formula (TQBF), and assumes the existence of a 
\emph{Door gadget} and a \emph{Crossover gadget}. A Door gadget's state may be \emph{open} or \emph{closed}, and the gadget may be ``traversed'' by the player if and only if it is open. A Door gadget also incorporates mechanisms to open and close it, which can be ``operated'' by the player.

\paragraph{Door gadget.} A Door gadget scheme is illustrated in Figure~\ref{fig:platform_framework_pspace_door}.

\begin{figure}[htbp]
\centering
\includegraphics[scale=1.75]{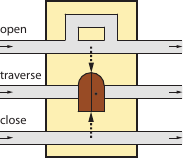}
\caption{Door gadget}
\label{fig:platform_framework_pspace_door}
\end{figure}

Three distinct paths enter the gadget from the left and exit to the right, without leakage. The ``traverse'' path implements the actual door, and may be traversed if and only if the gadget is in the open state. The other two paths allow the player to open and close the door, respectively. There is a subtle difference between the two: whenever the player traverses the ``close'' path, they trigger a mechanism that closes the door in the traverse path; when the player traverses the ``open'' path, they are \emph{allowed} to open the door, but they may choose not to.

\begin{remark}\label{r:door}
It will be evident that opening a door whenever possible is never a ``wrong'' move in the player's perspective. Hence, allowing the player to \emph{decide} whether to open a door or not upon traversal of the ``open'' path, rather than \emph{forcing} them to open it, has no real impact on the gadget's functioning, but often allows for less complicated implementations of the gadget itself.
\end{remark}

\begin{remark}
In the PSPACE-hardness framework of~\cite[Metatheorem~4.c]{Games}, the approach is slightly different, in that Door gadgets do not incorporate any mechanism to operate them, but are operated from a distance, by walking on \emph{pressure plates}. The advantage of such an approach is that it requires no Crossover gadget, but the drawback is that implementing pressure plates acting on arbitrarily distant doors may be problematic in some games.
\end{remark}

\paragraph{Framework.} The general framework yielding our reduction is sketched in Figure~\ref{fig:platform_framework_pspace}.

\begin{figure}[htbp]
\centering
\includegraphics[width=\linewidth]{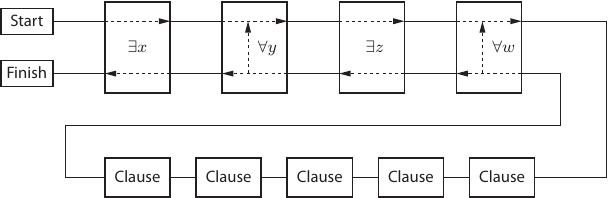}
\caption{General framework for PSPACE-hardness}
\label{fig:platform_framework_pspace}
\end{figure}

A given fully quantified Boolean formula $\exists x \forall y \exists z \cdots \varphi(x, y, z, \cdots)$, where $\varphi$ is in 3-CNF, is translated into a row of \emph{Quantifier gadgets}, followed by a row of \emph{Clause gadgets}, connected by several paths.

Traversing a Quantifier gadget at any time sets the truth value of the corresponding Boolean variable. On the other hand, each Clause gadget can be traversed if and only if the corresponding clause of $\varphi$ is satisfied by the current variable assignments. Whenever traversing an Existential Quantifier gadget, the player can choose the truth value of the corresponding variable. On the other hand, the first time a Universal Quantifier gadget is traversed, the corresponding variable is set to true.

When all the variables are set, the player attempts to traverse the Clause gadgets. If the player succeeds, they proceed to the ``lower parts'' of the Quantifier gadgets, where they are rerouted to the last Universal Quantifier gadget in the sequence. The corresponding variable is then set to false, and $\varphi$ is ``evaluated'' again by making the player walk through all the Clause gadgets.

The process continues, forcing the player to ``backtrack'' several times, setting all possible combinations of truth values for the universally quantified variables, and choosing appropriate values for the existentially quantified variables in the attempt to satisfy $\varphi$.

Finally, when all the necessary variable assignments have been tested and $\varphi$ keeps being satisfied, i.e., if the overall quantified Boolean formula is true, the exit becomes accessible, and the player may finish the level. Conversely, if the quantified Boolean formula is false, there is no way for the player to operate doors in order to reach the exit.

Next we show how to implement all the components of our framework using just Door gadgets, paths, and Crossover gadgets.

\paragraph{Clause.} Clause gadgets are implemented as sketched in Figure~\ref{fig:platform_framework_pspace_clause}.

\begin{figure}[htbp]
\centering
\includegraphics[scale=2.75]{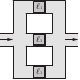}
\caption{Clause gadget. Labeled squares represent Door gadgets.}
\label{fig:platform_framework_pspace_clause}
\end{figure}

There is a Door gadget for each literal in the clause, and the player may traverse the clause if and only if at least one of the doors is open. (For simplicity, the ``open'' and ``close'' paths of the Door gadgets are not displayed.)

\paragraph{Existential Quantifier.} In Figure~\ref{fig:platform_framework_pspace_existential}, the Existential Quantifier gadget for variable $x$ is represented. $x_1$, $x_2$, etc.\ (resp., $\bar x_1$, $\bar x_2$, etc.) denote the positive  (resp., negative) occurrences of $x$ in the clauses of $\varphi$.

\begin{figure}[htbp]
\centering
\includegraphics[scale=2.75]{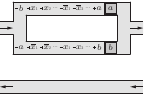}
\caption{Existential Quantifier gadget. ``$+a$'' (resp., ``$-a$'') is shorthand for ``open (resp., close) the door labeled $a$''.}
\label{fig:platform_framework_pspace_existential}
\end{figure}

When traversing the upper part of the gadget from left to right, the player must choose one of the two paths, thus setting the truth value of $x$ to either true or false. This is done by appropriately opening or closing all the doors corresponding to occurrences of $x$ in $\varphi$. For instance, in order to set the first occurrence of literal $x$ to false, the player is diverted to the ``close'' path of the Door gadget labeled $x_1$, and then rerouted to the Existential Quantifier gadget (this is formally equivalent to walking on a pressure plate, in the terminology of~\cite{Games}). If the paths to and from the Door gadgets intersect some other paths, then Crossover gadgets are used, as Figure~\ref{fig:platform_framework_pspace_legend} exemplifies.

\begin{figure}[htbp]
\centering
\includegraphics[width=\linewidth]{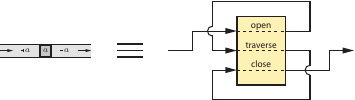}
\caption{Implementing Quantifier gadgets with Door gadgets. The item on the left is implemented as shown on the right.}
\label{fig:platform_framework_pspace_legend}
\end{figure}

The doors labeled $a$ and $b$ in Figure~\ref{fig:platform_framework_pspace_existential} prevent leakage between the two different paths of the Existential Quantifier gadget, enforcing mutual exclusion.

Finally, the lower part of the gadget is traversed from right to left when the player backtracks, and it is simply a straight path.

\paragraph{Universal Quantifier.} A Universal Quantifier gadget for variable $x$ is shown in Figure~\ref{fig:platform_framework_pspace_universal}.

\begin{figure}[htbp]
\centering
\includegraphics[scale=2.75]{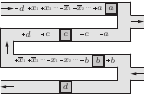}
\caption{Universal Quantifier gadget}
\label{fig:platform_framework_pspace_universal}
\end{figure}

When the player enters the gadget from the top left, door $d$ gets closed and variable $x$ is set to true. Then the player must exit to the top right, because door $c$ cannot be traversed from right to left.

When backtracking the first time, the player enters from the bottom right and, because door $d$ is still closed, they must take the upper path, thus setting variable $x$ to false. Incidentally, door $d$ gets opened and door $a$ gets closed, thus preventing leakage to the top left entrance, and forcing the player to exit to the top right again.

When backtracking the second time (i.e., when both truth values of $x$ have been tested), door $d$ is open and the player may finally exit to the bottom left. When done backtracking, the player will eventually enter this gadget again from the top left, setting $x$ to true again, etc.

\paragraph{Remarks.} We observe that, as a result of our constructions, each Door gadget is traversed by exactly three disjoint paths, in accordance with Figure~\ref{fig:platform_framework_pspace_door}. For instance, the door labeled $x_1$ is traversed by one path in the Clause gadget where it belongs, and is opened (resp., closed) via exactly one path going to and from the Quantifier gadget corresponding to variable $x$. Similarly, every other door in each Quantifier gadget is reached by exactly one ``open'' path and one ``close'' path.

As noted in Remark~\ref{r:door}, opening a door has the only possible effect of making new areas accessible to the player. Therefore, we may safely assume that the player always chooses to open a door when traversing its ``open'' path. Indeed, if the player chooses not to open some door at some point, and still manages to reach the level exit, then \emph{a fortiori} they can reach the exit by opening the door.

Finally, we observe that it does not matter if the three paths of a Door gadget are traversable in just one way, or in both ways. Moreover, the opening mechanism can be simplified, and implemented as a single ``dead end path'' reaching the gadget from the left, which is traversed in both ways. This contrasts with the open mechanism sketched in Figure~\ref{fig:platform_framework_pspace_door}, which enters and exits at opposite sides of the gadget. We will take advantage of this aspect when implementing the Door gadgets for Donkey Kong Country~1 and~2 (see Figures~\ref{PSPACEDKC1} and~\ref{PSPACEDKC2}).

\section{Super Mario Bros.}\label{s:mario}

Super Mario Bros.\ is a platform video game created by Shigeru Miyamoto,
developed by Nintendo, published for
the Nintendo Entertainment System (NES) in 1985. In the game, the player
controls Mario, an Italian plumber, and must navigate through worlds full of
enemies and obstacles to rescue the kidnapped Princess Toadstool from
Bowser. The game  spawned over a dozen sequels
(SMB~2, SMB~3, Super Mario World, Super Mario 64, Super Mario Sunshine,
Super Mario Galaxy, etc.)\
across various gaming platforms.
Mario's character became so popular that he became Nintendo's mascot
and appeared in over 200 games.

The gameplay in Super Mario Bros.\ is simple.  
Initially, Mario is small (one tile tall) and dies upon harmful contact with an enemy. 
If Mario manages to obtain a Super Mushroom, then he becomes Super Mario,
who is two tiles tall. Upon any harmful contact, Super Mario reverts back to normal Mario.
If Mario obtains a Star, then he becomes invincible for a limited time.
Mario's basic actions are limited to walking, running,  crouching, and jumping. The height of a jump is normally 4 tiles, and it becomes 5 tiles when Mario runs.  Mario can perform more advanced actions by combining basic moves. For
example, crouching while running results in a {\em crouch-slide}, which will allow him
to fit into narrow corridors while as Super Mario.  Mario can also jump while
crouching, allowing him to jump through narrow gaps.

Apart from solid ground, the basic environment also contains walls and
mid-air platforms. These are typically constructed from three types of
blocks: normal blocks, which are inert; item blocks, which release an
item when hit by Mario from below; and bricks, which are destroyed upon
activation. Mario usually obtains items by hitting item blocks.

Jumping on top of most enemies is non-harmful, specifically for all those used
in our proof, though there are exceptions not
encountered in our proof. In fact, Mario primarily defeats enemies by jumping on them.    
A staple enemy of the Mario series is the Goomba, which is a corrupted mushroom
that walks back and forth and dies upon being stomped by Mario. Another key enemy
is the Koopa, which does not play a role in our construction for Super Mario Bros.\
but will appear in one of our gadgets for Super Mario Bros.\ 3. Koopas come in two
varieties. Both will keep walking in a direction, turning around if a wall is encountered,
but Green Koopas fall off cliffs whereas Red Koopas treat cliffs as walls. We use only
Red Koopas in our construction. When Mario jumps on a Koopa, it is not instantly
defeated, but instead it temporarily hides in its shell, which Mario can then kick.
Once kicked, a Koopa shell slides with constant velocity and bounces off surfaces,
destroying bricks and activating item blocks if it hits them from the side.
Finally, we will use Firebars, which are harmful segments of fire that rotate about an
endpoint.
There are many other items and enemies in the game, but they do not play a role
in our proofs so we do not include them here. For more information on the game,
see~\cite{mariowiki}.

In this section, we prove the following:
\begin{theorem}\label{t:mario}
It is NP-hard to decide whether the goal is reachable from the start
of a stage in generalized Super Mario Bros.
\end{theorem}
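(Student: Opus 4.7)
The plan is to invoke the NP-hardness framework of Section~\ref{s:platform_framework}: once I build Start, Finish, Variable, Clause, and Crossover gadgets in Super Mario Bros., the reduction from 3-SAT is automatic, and the total stage size is polynomial in the size of the input formula. Each gadget will be a small constant-size region of the stage with designated entry/exit corridors, and the whole stage is then assembled by connecting gadgets with long straight corridors.

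The Start and Finish gadgets are designed together to force Mario to remain Super throughout the level: Start places a Super Mushroom that Mario must collect in order to make progress, and Finish sits behind a brick that only Super Mario can destroy. For the Variable gadget, I would use a vertical drop taller than Mario's maximum jump, landing on a ledge with one-way corridors leaving in each direction; Mario must commit to exactly one corridor, corresponding to $x_i$ or $\neg x_i$, and cannot climb back up to switch. Each of the two outgoing corridors then branches to the Clause gadgets whose clauses contain the chosen literal. For the Clause gadget, I would use a small chamber with a breakable brick blocking the Check path. Each of the clause's (up to three) incoming literal paths drops Mario into a pocket of the chamber from which he can jump up and smash the brick from below, permanently clearing the Check path; the walls of each pocket are higher than Mario's jump, so after smashing the brick Mario can only continue forward along his own literal path, rerouted directly to the next clause as permitted by Remark~\ref{r:xover}.

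The main obstacle is the Crossover gadget, because Mario's two-dimensional motion tends to make any crossing leaky. I would exploit Remarks~\ref{r:xover} and~\ref{r:xover2}, which let me assume each of the two crossing corridors is traversed in a single fixed direction, at most once, and in a known relative order. A natural implementation uses a rotating Firebar placed at the crossing point of a horizontal and a vertical corridor, with narrow brick walls shaping exactly two safe passage windows (one horizontal, one vertical) through the Firebar's rotation cycle. The geometry is chosen so that Mario cannot jump from the horizontal corridor into the vertical one or vice versa without being struck by the Firebar, which would reduce Super Mario to small Mario and thereby lock him out of the Finish brick later. Verifying leak-freeness then reduces to a finite case analysis on Mario's jump arcs against the Firebar's phase, greatly simplified by the single-traversal and ordering assumptions.

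Assembling these five gadgets together with polynomially many connecting corridors yields a polynomial-time reduction from 3-SAT to reachability in generalized Super Mario Bros. The Firebar-based Crossover is the part I expect to require the most careful geometric bookkeeping; the other gadgets should be straightforward given Mario's standard mechanics. Once all gadgets are verified, Theorem~\ref{t:mario} follows from the correctness of the framework.
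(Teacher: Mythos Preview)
Your Start, Finish, and Variable gadgets match the paper's essentially verbatim. The divergences are in the Clause and Crossover gadgets, and the Crossover is where your proposal has a real gap.

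\textbf{Crossover.} A Firebar is a \emph{timing} obstacle, not a \emph{state} obstacle. If Mario, approaching from the left, can time the Firebar to pass through the crossing point and exit right, then nothing about the Firebar prevents him from instead jumping upward at that same crossing point with the same timing and leaking into the vertical corridor. The Firebar does not ``know'' which corridor Mario came from; it sweeps the same region regardless. Your appeal to ``narrow brick walls shaping exactly two safe passage windows'' does not resolve this, because at the actual crossing the two corridors must share space, and any safe window through that shared space is direction-agnostic. The ``finite case analysis on jump arcs against Firebar phase'' you defer to would, if carried out, reveal exactly this leakage. The paper's Crossover avoids the problem by exploiting Mario's \emph{size state}: the horizontal passage forces Mario to become small (take damage from a Goomba), squeeze through a one-tile corridor, and re-power-up from a Mushroom on the far side; the vertical passage requires Mario to stay big to break bricks. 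Narrow one-tile side corridors then physically separate the two, since big Mario cannot fit and small Mario cannot break the bricks. This state-based mechanism is the key idea you are missing.

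\textbf{Clause.} Your brick-smashing idea is different from the paper's (which uses a Star released into a Firebar corridor), and is not obviously wrong, but you have not addressed leakage: if Mario in a literal pocket can jump high enough to bonk the brick from below, then once the brick is gone he can jump through the resulting hole into the Check path. You would need additional geometry (e.g., an overhang so the broken brick leaves a hole Mario still cannot fit through, or a one-way drop separating the pocket from the hole) to close this. The paper's Star-plus-Firebar design sidesteps the issue because hitting the item block from below never creates a traversable opening.
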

\begin{proof}
When generalizing the original Super Mario Bros., we assume that the
screen size covers the entire level, because the game forbids Mario from
going left of the screen. This generalization is not needed in later games,
because those games allow Mario to go left.

We use the general framework provided in
Section~\ref{s:platform_framework}, so it remains only to implement
the gadgets. The Start and Finish gadgets are straightforward. The
Start gadget, shown in Figure~\ref{fig:mario_start},
includes an item block containing a Super Mushroom which
makes Mario into Super Mario.
The Super Mushroom serves two purposes. First, Super Mario is two tiles
tall, which prevents him from fitting into narrow horizontal
corridors, a property essential to our other gadgets. Second, Super
Mario is able to destroy bricks whereas normal Mario cannot. In order
to force the player to take the Super Mushroom in the beginning,
we block Mario's path with bricks in
the Finish gadget, shown in Figure~\ref{fig:mario_finish}.

\ifabstract

\begin{figure}[htbp]
\centering
\begin{minipage}{0.45\linewidth}
\centering
\includegraphics[width=\linewidth]{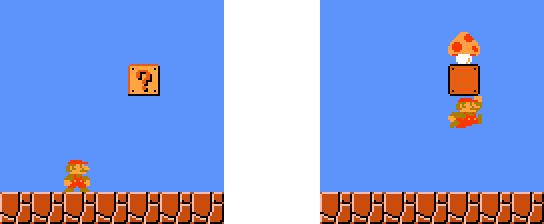}
\caption{Left: Start gadget for Super Mario Bros.
Right: The item block contains a Super Mushroom}
\label{fig:mario_start}
\end{minipage}\hfill
\begin{minipage}{0.45\linewidth}
\centering
\includegraphics[width=\linewidth]{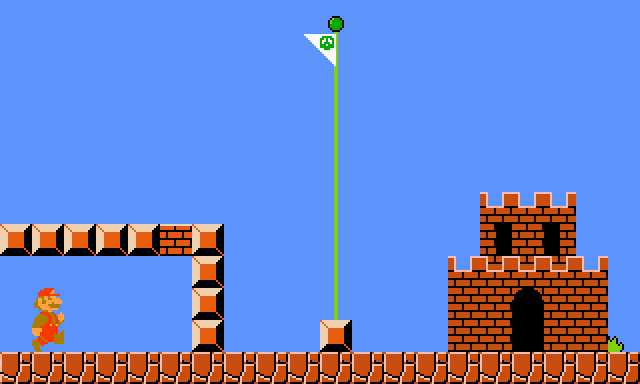}
\caption{Finish gadget for Super Mario Bros.}
\label{fig:mario_finish}
\end{minipage}
\vspace{-1ex}
\end{figure}

\else

\begin{figure}[htbp]
\centering
\includegraphics[scale=0.5]{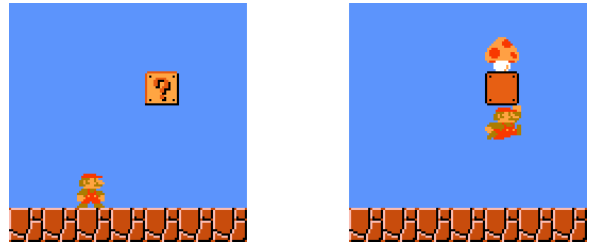}
\caption{Left: Start gadget for Super Mario Bros.
Right: The item block contains a Super Mushroom}
\label{fig:mario_start}
\end{figure}

\begin{figure}[htbp]
\centering
\includegraphics[scale=0.5]{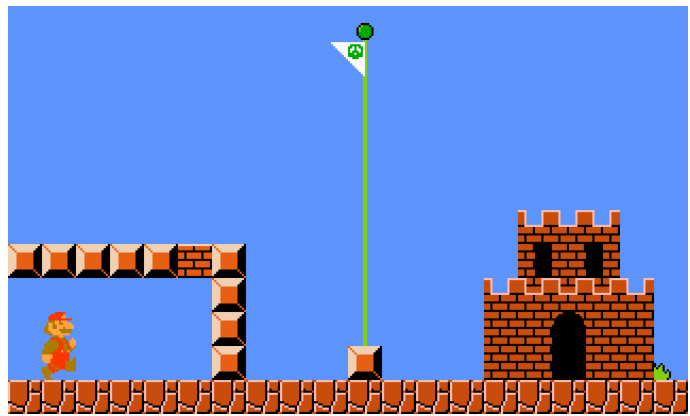}
\caption{Finish gadget for Super Mario Bros.}
\label{fig:mario_finish}
\end{figure}

\fi

Next, we implement the Variable gadget, illustrated in
Figure~\ref{fig:mario_var}.  There are two entrances, one from each
literal of the previous variable (if the variable is $x_i$, the two
entrances come from $x_{i-1}$ and $\neg x_{i-1}$).  Once Mario
falls down, he cannot jump back onto the ledges at the top, so Mario
cannot go back to a previous variable. In particular, Mario cannot go
back to the negation of the literal he chose. To choose which value to
assign to the variable, Mario may fall down either the left passage or
the right.

\ifabstract
\begin{figure}[htbp]
\centering
\begin{minipage}{0.3\linewidth}
\includegraphics[width=\linewidth]{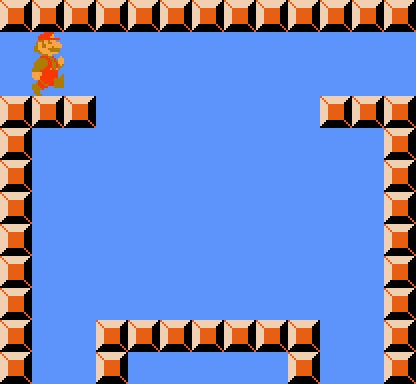}
\caption{Variable gadget for Super Mario Bros.}
\label{fig:mario_var}
\end{minipage}\hfill
\begin{minipage}{0.65\linewidth}
\centering
\includegraphics[width=\linewidth]{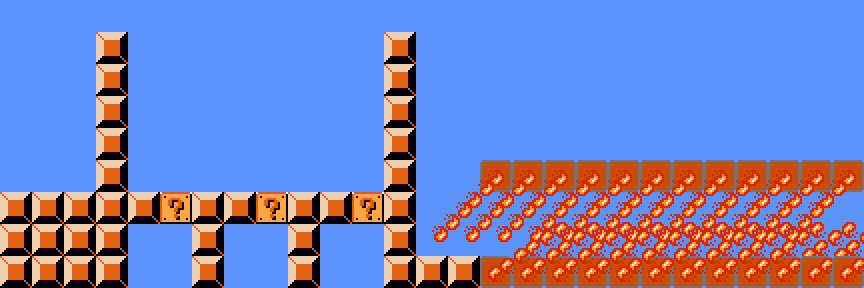}
\caption{Clause gadget for Super Mario Bros.}
\label{fig:mario_clause}
\end{minipage}
\end{figure}
\fi

\iffull
\begin{figure}[htbp]
\centering
\includegraphics[scale=0.5]{img/mario_var.png}
\caption{Variable gadget for Super Mario Bros.}
\label{fig:mario_var}
\end{figure}
\fi

Now we present the Clause gadget, illustrated in
Figure~\ref{fig:mario_clause}.  The three entrances at the bottom
correspond to the three literals that are in the clause. When the Clause
is visited, Mario hits the item block, which releases a Star into
the area above. The Star will bounce around that area until Mario comes through the Check
path to pick it up, allowing him to traverse through the Firebars safely.
Without a Star, Mario cannot get through the Firebars without getting hurt. Note that, despite the fact that a Star's invulnerability effects on Mario last just a few seconds, the Star itself never vanishes until it is collected by Mario.

\iffull
\begin{figure}[htbp]
\centering
\includegraphics[scale=0.5]{img/mario_clause.png}
\caption{Clause gadget for Super Mario Bros.}
\label{fig:mario_clause}
\end{figure}
\fi

\ifabstract
\begin{wrapfigure}{r}{3in}
\centering
\includegraphics[width=\linewidth]{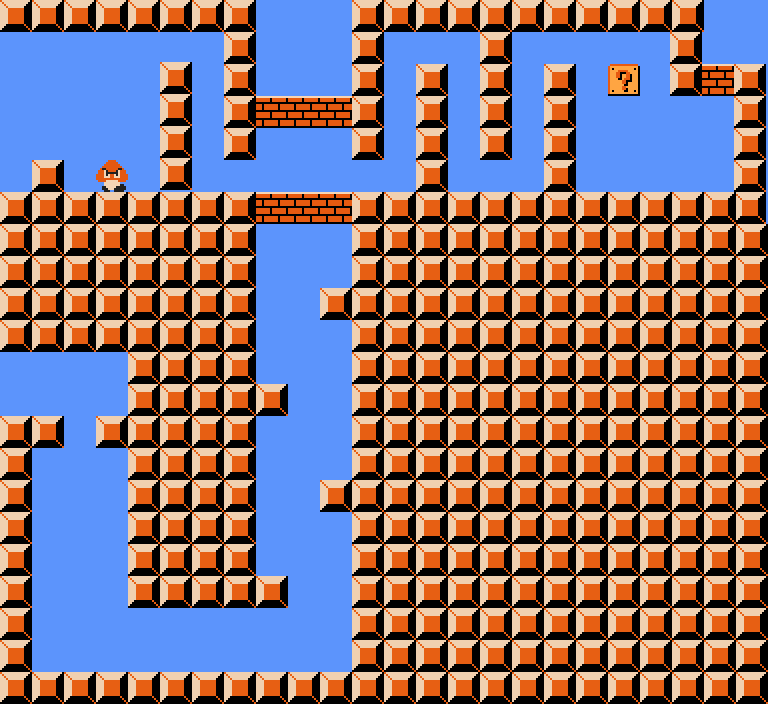}
\caption{Crossover gadget for Super Mario Bros.}
\label{fig:mario_xover_smb}
\vspace{-2ex}
\end{wrapfigure}
\fi

Finally, we implement the Crossover gadget, illustrated in
Figure~\ref{fig:mario_xover_smb}.  This unidirectional gadget
enables Mario to traverse left-to-right or bottom-to-top, which suffices
by Remark~\ref{r:xover}.
Mario cannot enter this Crossover initially from the top or right.

Approaching from the left, Mario may
get hurt by the Goomba and revert back to his small state, then walk
through the narrow corridor and pick up a Super Mushroom from
the item block on the other side, allowing him to break through the brick.
Initially, there is no leakage to the bottom nor leakage to the top because
small Mario cannot break bricks. Furthermore, there is not enough space for
Mario to run and crouch-slide through the gap.

Approaching from the bottom,
Mario may break the right brick on the lower level, jump onto the left brick
of the lower level, break the left brick on the upper level, and jump onto the
right brick of the upper level to proceed upwards. Mario must be big
while traversing vertically in order to break the bricks, and because there is
a one-way passage leading to the bottom entrance, Super Mario cannot
break the bricks first, then return later as small Mario. Because Mario must be big
this entire time, he cannot fit through the narrow gaps on the sides and
so there is no horizontal leakage.

There may be leakage from the horizontal path to the vertical path \emph{after} traversing the vertical path. This is because, once the bricks have been broken, even small Mario can exit from the top. Fortunately, Remark~\ref{r:xover2} says that such a leakage can safely be allowed. It suffices that no leakage can occur from the vertical path to the horizontal one.
\end{proof}

\iffull
\begin{figure}[htbp]
\centering
\includegraphics[scale=0.5]{img/mario_xover_smb.png}
\caption{Crossover gadget for Super Mario Bros.}
\label{fig:mario_xover_smb}
\end{figure}
\fi

\ifabstract
Using a few different gadgets,
Theorem~\ref{t:mario} extends to Super Mario Bros.: The Lost Levels, Super Mario Bros.\ 3, and
Super Mario World; see Appendix~\ref{mario appendix}.
\fi

\later{
\ifabstract
  \section{Other Mario games}
  \label{mario appendix}
\else
  \paragraph{Other Mario games.}
\fi
The NP-hardness from Theorem~\ref{t:mario} holds just
as well for the NES sequel, SMB: The Lost Levels (known as
SMB~2 in Japan) because the gameplay mechanics are exactly the same.
However, to extend the result to later games, we need to modify
the Clause gadget (for both Super Mario Bros.\ 3 and Super Mario World) and
the Crossover and Finish gadgets (for SMW). Because Firebars appear in neither SMB~3 nor SMW,
we have to replace them with something else in the Clause gadget. For SMB~3, we replace
the Firebars with a vertical tunnel of bricks, as shown in Figure~\ref{fig:mario_clause_smb3}.
The three entrances corresponding to the literals are now on top,
with a Red Koopa in each entrance. To unlock the gadget, Mario
jumps on the Koopa and kicks its shell down, which
breaks open the brick columns and then falls down into the pit, allowing
Mario to pass through but not allowing him to re-use the shell for other clauses.
(This gadget cannot be used for Super Mario Bros.\ because Koopa shells do
not break blocks in that game.)

\begin{figure}[htbp]
\centering
\includegraphics[scale=0.5]{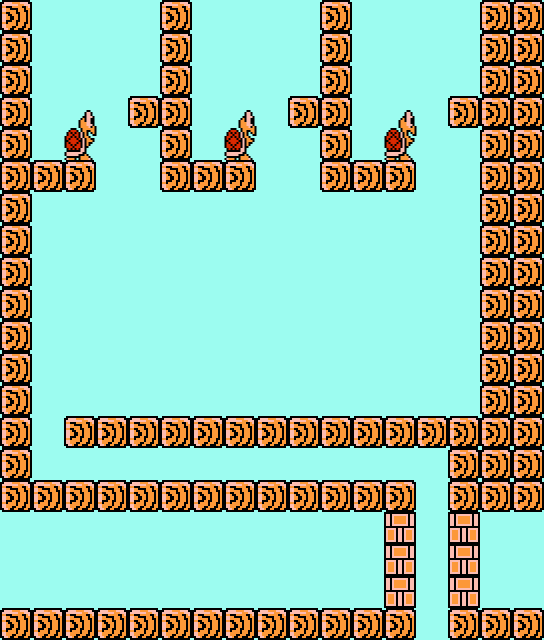}
\caption{Clause gadget for Super Mario Bros.\ 3}
\label{fig:mario_clause_smb3}
\end{figure}

For SMW, we modify the Clause gadget for SMB by replacing the Firebars
with Munchers, which are indestructible black plants attached to a ground
surface which harms Mario when walked on.

To modify the Crossover gadget for SMW, we replace bricks with rotating blocks, which
Super Mario may destroy by spin jumping on them from above; see
Figure~\ref{fig:mario_xover_smw}.
A \emph{spin jump} is a powerful jump performed by Mario, which destroys rotating blocks
he lands on, but only if he is big.
This crossover is again unidirectional, now supporting
left-to-right and top-to-bottom traversals. Moreover, once again there may be leakage from the horizontal passage to the vertical passage after the latter has been traversed, but there is no leakage from vertical to horizontal. As before, this suffices by Remark~\ref{r:xover2}.

\begin{figure}[htbp]
\centering
\includegraphics[width=\linewidth]{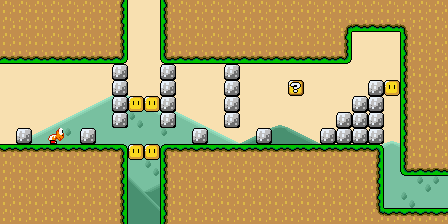}
\caption{Crossover gadget for Super Mario World}
\label{fig:mario_xover_smw}
\end{figure}

The Finish gadget for SMW is a straightforward modification, shown in
Figure~\ref{fig:mario_finish_smw}.
\begin{figure}[htbp]
\centering
\includegraphics{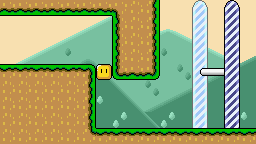}
\caption{Finish gadget for Super Mario World}
\label{fig:mario_finish_smw}
\end{figure}

American SMB~2 (known as Super Mario USA in Japan) has somewhat different
mechanics from those of the other SMB games.
The main new mechanic is Mario's ability to uproot plants growing in the
ground, and also to stand on many types of enemy and pick them up.
After doing so, Mario can walk and jump while carrying one such object overhead,
and finally he can throw the object.

In American SMB~2, Mario can jump 5 tiles high by running or crouch-jumping
(this is done by crouching for a short amount of time to ``charge'' and then
jumping while charged).
However, he can jump 3 tiles higher if he stands on a Ninji
(who jumps up and down cyclically) and waiting for the appropriate time.
In the following gadgets, we assume Mario to be small.
He normally starts big, but we can enforce smallness by forcing him to fall
into a pit full of spikes as the level starts.
In order to become big again, he would
have to find mushrooms (of which there are none in our levels) or kill a
certain number of enemies (but we will ensure that he can kill no enemy).

\begin{figure}[htbp]
\centering
\includegraphics{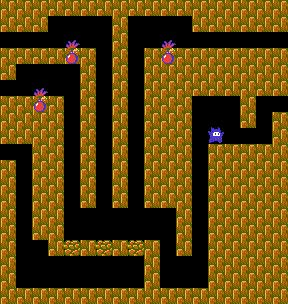}
\caption{Clause gadget for American SMB 2}
\label{fig:mario_clause_smb2}
\end{figure}

Mario visits the Clause gadget (Figure~\ref{fig:mario_clause_smb2})
from one of the three routes containing a bomb plant.
Mario can pick up the bombs and throw them to the breakable tiles below.
Then he traverses the gadget from right to left.
He has to pick up the Ninji and drop it in the bottom-left corner of the gadget
in order to jump 8 tiles and exit to the left. Note that if he falls down 
after throwing a bomb, he gets stuck because he has no Ninji.
Also, the Ninji is insufficient to let him jump from the lower part of the
gadget to the literal paths.

\begin{figure}[htbp]
\centering
\includegraphics{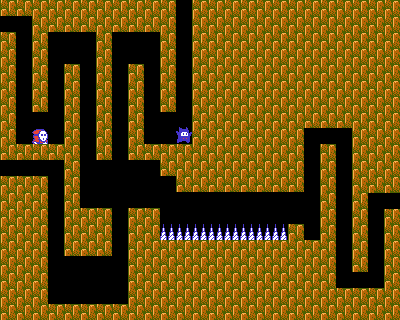}
\caption{Crossover gadget for American SMB 2}
\label{fig:mario_xover_smb2}
\end{figure}

In the Crossover gadget (Figure~\ref{fig:mario_xover_smb2}), Mario is allowed
to go from left to right first, and from top to bottom after.
To traverse left to right, he has to pick up the red Shy Guy, throw it onto
the spikes, and jump on its head. Then the Shy Guy will lead Mario safely to
the right.
Because the Shy Guy is red (and not pink), it will not turn around at the edge
of the platform after the spikes, and instead will fall into the pit, and
allow Mario to jump to the upper platform and exit.
If Mario tries to put the Shy Guy anywhere else, he either dies or gets stuck.
To traverse top to bottom, Mario picks up the Ninji, takes it to the bottom
platform, and uses it to make a high jump to the left.
The Ninji is insufficient to jump to the Shy Guy area, and the Shy Guy is
insufficient to allow Mario to perform a high jump.
On the other hand, the Ninji cannot be used to traverse the spikes, because
it only jumps in place.

In order to kill enemies (and get hearts that make him grow), Mario would
have to throw something (either a bomb or another enemy) at them.
However, by our design, Mario cannot carry enemies from one gadget to another,
and he can never reach more than one object (bomb or enemy) at a time within
a gadget. Specifically, in order to make sure that Mario cannot carry an enemy from one gadget to another, we can connect gadgets with 1-tile-high tunnels that also include jumps and long falls, as the right path in Figure~\ref{fig:mario_xover_smb2} suggests. Since Mario cannot carry an enemy through a 1-tile-high tunnel (the tunnel has to be at least 2 tiles high for him to carry an enemy overhead) he would have to throw such an enemy into the tunnel, and then follow it. This means in particular that the enemy must be a Shy Guy, because a Ninji does not walk. But if the tunnel ends with a wall that has to be jumped, there is no way Mario can pick up the Shy Guy again and carry it above the wall without being killed. Also note that, in a Crossover gadget, Mario cannot use the Ninji to kill the Shy Guy that may be in the pit. Indeed, Mario always throws objects diagonally downwards at a fixed angle, and when a Ninji that has been thrown hits the ground, its momentum makes it bounce forward by three or four tiles only.

Concluding, we have the following.
\begin{theorem}\label{t:mario2}
It is NP-hard to decide whether the goal is reachable from the start
of a stage in generalized Super Mario Bros.~1--3, The Lost Levels, and Super Mario World.\hfill\qed
\end{theorem}

}
\section{Donkey Kong Country}\label{s:dkc}

Donkey Kong Country is a platform video game developed by Rare Ltd.\ and
published for the Super Nintendo Entertainment System (SNES) in 1994.
Like Mario, the character of Donkey Kong was originally created by
Shigeru Miyamoto, and appears in many other games as well.  The
premise of Donkey Kong Country is isomorphic to that of Super Mario
Bros.  In particular, King K.\ Rool has stolen Donkey Kong's banana
hoard, and the player controls Donkey Kong (and occasionally his small
nephew Diddy Kong) to navigate through levels filled with enemies and
obstacles in order to defeat K.\ Rool.

The gameplay of Donkey Kong Country is also quite simple. We only
discuss gameplay mechanics necessary for our proof. For more
information on the game, see~\cite{dkwiki}. Donkey Kong's basic
actions are walking, running, rolling, crouching, and jumping. He may
also pick up Barrels and throw them, after which they roll along the
ground until they hit a wall or an enemy, killing it. Donkey Kong can
jump into Barrel Cannons, from which he may be fired in a straight
line, defying gravity. Manual Barrel Cannons cycle through a preset
set of angles at which Donkey Kong may be fired, which the player may
choose with proper timing. Automatic Barrel Cannons fire Donkey Kong
as soon as he jumps in. Zingers are enemies that resemble giant
bees. They kill Donkey Kong on contact, and cannot be killed except by Barrels.


\later{
\ifabstract
  \section{NP-hardness of Donkey Kong games}
\else
  \subsection{NP-hardness}
\fi
\label{donkey NP}

In this section, we prove the following:
\begin{theorem}\label{t:dkc}
It is NP-hard to decide whether the goal is reachable from the start
of a stage in generalized Donkey Kong Country.
\end{theorem}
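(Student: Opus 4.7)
The plan is to apply the NP-hardness framework of Section~\ref{s:platform_framework}, implementing the five gadget types (Start, Finish, Variable, Clause, Crossover) using only standard DKC mechanics. The key primitives I would exploit are Zingers, which act as unkillable barriers except when hit by a thrown Barrel, and Automatic Barrel Cannons, which give us clean one-way passages that forbid backtracking and allow vertical trajectories to skip over horizontal corridors. With these two devices the construction parallels the Mario one: Zingers play the role of Firebars, a consumed Barrel plays the role of the Star, and Automatic Barrel Cannons replace the long one-way falls.

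Start and Finish I would make trivial, since DK does not need to be held in any persistent state throughout the level (there is no analogue of the ``big Mario'' invariant needed here). For the Variable gadget I would place DK on a high shelf with two separated drop-shafts labeled $x$ and $\neg x$; each shaft ends in an Automatic Barrel Cannon that fires DK outward onto the corresponding literal path. Because the drop is too high to jump back up and each cannon fires only in one direction, picking a side commits DK to that literal and forecloses both the other literal and a return to the previous variable. The entry point at the top is reached by a similar arriving cannon from the previous variable, so the flow between consecutive variables matches the framework.

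For the Clause gadget I would block the Check path with a Zinger in a one-tile-tall corridor. For each of the three literals in the clause, the literal path detours through a room containing a loose Barrel positioned so that DK may throw it along the floor; the Barrel rolls into the clause corridor, strikes the Zinger, and is consumed together with the Zinger, permanently clearing the Check path through this clause. To prevent a literal visit from leaking onto the Check path itself, the literal room exits only via an Automatic Barrel Cannon (or an equivalent one-way fall) to the next stop in the literal's route, with no walkable connection into the Zinger corridor beyond the throwing lane. This guarantees that the clause is merely \emph{unlocked} by a literal, not traversed by it.

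The main obstacle is the Crossover gadget. I would implement it as a horizontal walking corridor intersected by a vertical Barrel Cannon shaft: horizontal traffic simply walks across at floor level, while vertical traffic enters a lower Automatic Barrel Cannon aimed straight up whose trajectory passes through the corridor's cross-section and lands inside an upper Automatic Barrel Cannon on the far side, without DK ever stopping at the intersection. I would plug the four corner cells of the intersection with Zingers so that DK cannot peel off the cannon arc into the corridor nor jump from the corridor into the shaft. By Remarks~\ref{r:xover} and~\ref{r:xover2} we only need unidirectional crossing with no leakage from the vertical path to the horizontal one, which gives enough slack to tolerate any residual open geometry after the vertical path has been fired through. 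Once all five gadgets are in place, Theorem~\ref{t:dkc} follows directly from the framework.
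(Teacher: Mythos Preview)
Your proposal is correct and follows essentially the same approach as the paper: trivial Start/Finish, a fall-based Variable, Barrels thrown at Zingers to unlock Clauses, and Automatic Barrel Cannons for one-way travel and the Crossover. The paper's Crossover is in fact even simpler than yours---it places Automatic Barrel Cannons at all four ports so that both crossing paths are pure cannon chains, with no walking corridor and no Zinger plugs needed.
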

\begin{proof}
We show that Donkey Kong Country is NP-hard by reducing from 3-SAT. As
in the proof for Super Mario Bros., we use the general framework
provided in Section~\ref{s:platform_framework}.
The Start and Finish gadgets are
trivial; they are simply the beginning and end parts of the level,
respectively. The Variable gadget is analogous to the Variable
gadget for Mario, replacing blocks with the appropriate terrain from
Donkey Kong Country. The Clause gadget, illustrated in
Figure~\ref{fig:dkc_clause}, is also similar to that in Mario. When
visiting the Clause gadget from a Variable gadget, Donkey Kong must pick up the
Barrel and toss it down, after which it rolls and kills the Zinger
(bee).  The Barrel Cannon above each platform is used to return
upward, and the platforms are sufficiently high that, if Donkey Kong
drops down, he cannot jump back up onto the platforms.

\begin{figure}[htbp]
\centering
\includegraphics[scale=0.5]{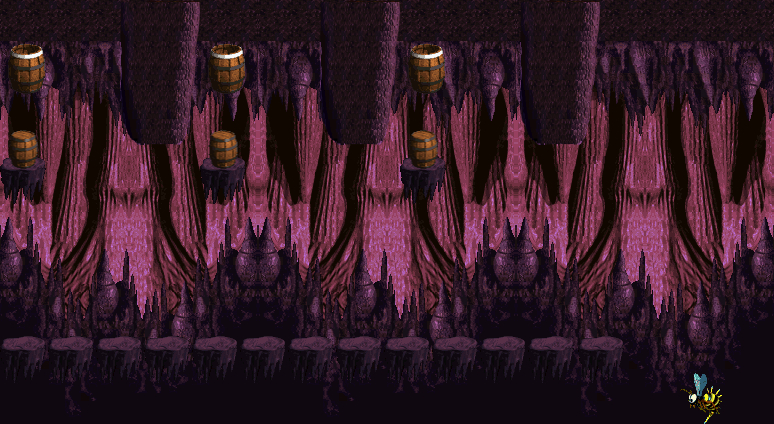}
\caption{Clause gadget for Donkey Kong}
\label{fig:dkc_clause}
\end{figure}

The final Check path, illustrated in Figure~\ref{fig:dkc_check}, is
straightforward.  The illustration is not drawn to scale, but the idea
is that at the end, Donkey Kong enters the Automatic Barrel Cannons on
the left, which blast him towards the Automatic Barrel Cannons on the
right, which blast him downward toward the goal. There are Zingers
positioned in the path, one per clause (these are the same Zingers
that are in the Clause gadgets) so that Donkey Kong can reach the goal
if and only if every Zinger has been killed.

\begin{figure}[htbp]
\centering
\includegraphics[width=\linewidth]{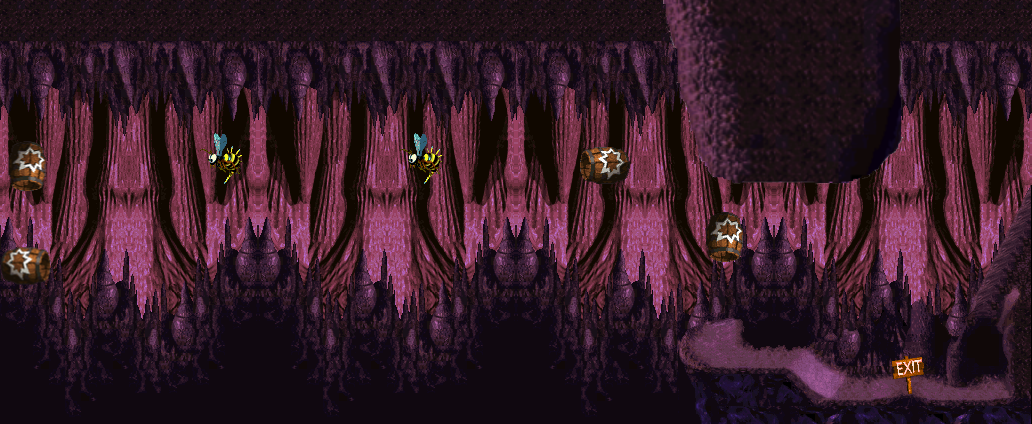}
\caption{Clause check for Donkey Kong}
\label{fig:dkc_check}
\end{figure}

Finally, we must implement the Crossover gadget (Figure~\ref{fig:dkc_xover}),
which is remarkably easy in Donkey Kong Country.
Essentially, at every entrance/exit, there is a forward Automatic Barrel Cannon
and a backward one. Therefore, the horizontal passage can be traversed
without leakage to the vertical passage, and vice versa.
\end{proof}

\begin{figure}[htbp]
\centering
\includegraphics[scale=0.5]{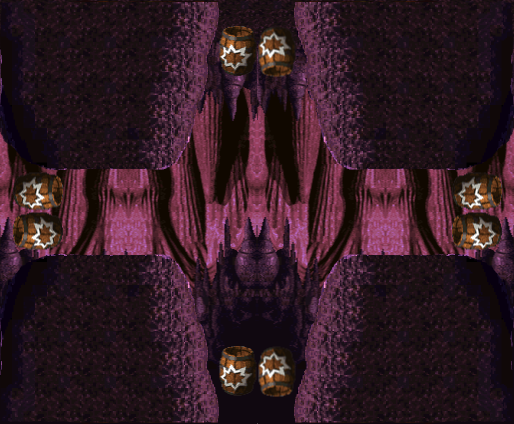}
\caption{Crossover gadget for Donkey Kong}
\label{fig:dkc_xover}
\end{figure}


\paragraph{Other Donkey Kong games.}
Theorem~\ref{t:dkc} holds just as well for the sequels---Donkey Kong Country~2 and~3---because Zingers, Barrels, and Barrel Cannons are all present in these games.
}
\subsection{PSPACE-completeness}

We can prove PSPACE-completeness of Donkey Kong Country and its sequels, using the general PSPACE-hardness framework given in Section~\ref{s:platform_framework}.

\begin{theorem} \label{DKC1 PSPACE}
It is PSPACE-complete to decide whether the goal is reachable from the start of a stage in generalized Donkey Kong Country\ 1.
\end{theorem}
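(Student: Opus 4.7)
The plan is to invoke the PSPACE-hardness framework of Section~\ref{s:platform_framework}, which reduces TQBF to reachability assuming the existence of a Door gadget and a Crossover gadget. Membership in PSPACE follows from the generic argument given in the introduction: the game state (positions and velocities of Donkey Kong, barrels, zingers, and other elements) has polynomial size, and the engine is deterministic in the player's inputs, so we play nondeterministically and appeal to NPSPACE $=$ PSPACE. For the Crossover gadget, I simply reuse the one already constructed in the proof of Theorem~\ref{t:dkc} (Figure~\ref{fig:dkc_xover}), whose Automatic Barrel Cannons at every entrance and exit guarantee unidirectional traversal without any leakage between the horizontal and vertical passages.

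The substantive task is to construct a Door gadget for Donkey Kong Country~1 satisfying the specification of Figure~\ref{fig:platform_framework_pspace_door}: three internally disjoint paths (traverse, open, close), a persistent open/closed state, and the ability to toggle the state arbitrarily many times. The natural candidate for encoding state is a Barrel resting on a small ledge that obstructs the traverse path: when the Barrel is present the door is closed, and when it has been thrown away the door is open. The traverse path would thread horizontally past this ledge at a height where Donkey Kong can walk under it only when the Barrel is absent (if the Barrel is there, it falls or blocks his passage, or a Zinger immediately adjacent kills him). Directionality of the three paths is enforced with Automatic Barrel Cannons exactly as in the Crossover construction, so that the traverse path cannot itself be used to pick up or replace the Barrel.

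For the ``open'' mechanism, I exploit the simplification noted at the end of Section~\ref{s:platform_framework}: a single dead-end corridor that reaches the ledge from the left, from which Donkey Kong can pick up and hurl the Barrel away into a pit, and then walk back out. This is consistent with Remark~\ref{r:door}, since opening a door is never disadvantageous. The ``close'' mechanism is the more delicate of the two: it must deposit a fresh Barrel back onto the ledge every time it is traversed, irrespective of how many times it has already been used. I would implement this with a Barrel that spawns (or respawns off-screen as is standard in DKC) at a fixed location and is inevitably knocked or rolled onto the ledge by Donkey Kong's passage through the close corridor, using a short arrangement of slopes or a Barrel Cannon aimed at the ledge. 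The crucial point is idempotence and repeatability: each traversal of the close path must restore the closed state even if the door was already closed, so the spawn mechanism must not depend on the current state.

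The hardest part of this construction will be verifying the non-leakage conditions in all combinations of states: in particular, that (i) in the open state, the traverse path does not incidentally allow Donkey Kong to reach the open/close corridors (or vice versa), and (ii) Donkey Kong cannot abuse the Barrel he picks up along the open path by carrying it into the traverse corridor to bypass a Zinger elsewhere. The first is handled by Barrel Cannons and by making the vertical separations between paths too high for Donkey Kong to jump. The second is handled by placing the barrel-disposal pit between the pickup point and the junction with the traverse path, or by placing a Zinger above the open path that forces Donkey Kong to throw the Barrel in a specific safe direction. Once these cases are checked in the concrete layout (to be displayed as Figure~\ref{PSPACEDKC1}), the hypotheses of the framework are met, and Theorem~\ref{DKC1 PSPACE} follows from the combination of PSPACE membership and the TQBF reduction.
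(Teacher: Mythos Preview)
Your high-level plan is exactly right: invoke the PSPACE framework, reuse the Barrel-Cannon Crossover, and build a Door gadget with three disjoint paths and a persistent two-valued state. Membership in PSPACE is also handled correctly. The gap is in the specific object you chose to encode the door's state.

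A Barrel is a \emph{consumable} item in Donkey Kong Country: once thrown, it breaks on the first wall or enemy it meets. Your ``open'' mechanism destroys it, so your ``close'' mechanism must conjure a fresh one every time the path is taken. You appeal to off-screen respawning, but that is a global, screen-dependent mechanic rather than something a local gadget can control; in the generalized game (where the screen may cover arbitrarily large regions, or where respawn semantics are deliberately abstracted away) it is not available as a reliable primitive. Without a guaranteed, unbounded supply of Barrels inside the gadget, the door cannot be toggled the unboundedly many times that the TQBF backtracking requires. A secondary issue is the traverse-blocking mechanism: a resting Barrel does not obstruct Donkey Kong---he simply picks it up---so ``walk under it only when the Barrel is absent'' does not actually implement a closed door.

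The paper fixes both problems by choosing a different state object: the \emph{Tire}. The Tire is persistent (never destroyed), can be pushed left and right arbitrarily often, and interacts with Donkey Kong by bouncing him upward. The gadget places the Tire on an icy slope between two positions; ``open'' and ``close'' are implemented by pushing (or incidentally nudging) the Tire between those positions, and the ``traverse'' path drops Donkey Kong from a Barrel Cannon so that he either lands on the Tire and is trapped among Zingers (closed) or slides safely along the ice beneath them (open). The key idea you were missing is to encode state in the \emph{position} of a single indestructible movable object rather than in the \emph{presence} of a consumable one.
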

\begin{proof}
In all Donkey Kong Country games, we may assume that the player controls only a single Kong, by placing a DK barrel (a barrel containing the backup Kong member) at the start of the level, followed by a wall of red Zingers (which are not killable by Barrels).

We use an object from the game, the Tire, which behaves as follows: it is an object which can be pushed
horizontally by the player, which causes it to roll in that direction, and if jumped on, the player bounces
up very high.

\ifabstract
The Crossover gadget is easily implemented with Automatic Barrel Cannons,
\else
The Crossover gadget has already been described,
\fi
and the Door gadget is depicted in Figure~\ref{PSPACEDKC1}.
The door is closed if the Tire is located as shown in the picture, and is open if it is located up the slide. The ground is made of ice, so that both the Tire and the Kongs slide on it when they gain some speed. The right-facing Zingers (yellow) are static, while the left-facing ones (red) move from left to right in swarms, as indicated by arrows.

If the door is closed, the player attempting to traverse it from the top Barrel will land on the Tire and will be stuck because of the surrounding Zingers. If the door is open, the player lands on the ground, slides rightward below the Zingers while crouching, and safely reaches the pit on the right.

If the player comes from the ``open'' entrance, and the door is closed, they can go left, push the Tire up the slope, and safely come back. The player cannot jump down the pit, because the red Zingers are covering it.

If the player comes from the ``close'' path, they must climb on the rope (when the red Zingers move to the right) and reach the upper path. If the Tire is on the edge of the slope (i.e., the door is open), the player incidentally pushes it to the right, thus making it slide down, and hence closing the door.

Membership in PSPACE follows from the remarks in the Introduction.
\end{proof}

\begin{figure}[htbp]
\centering
\includegraphics[width=\linewidth]{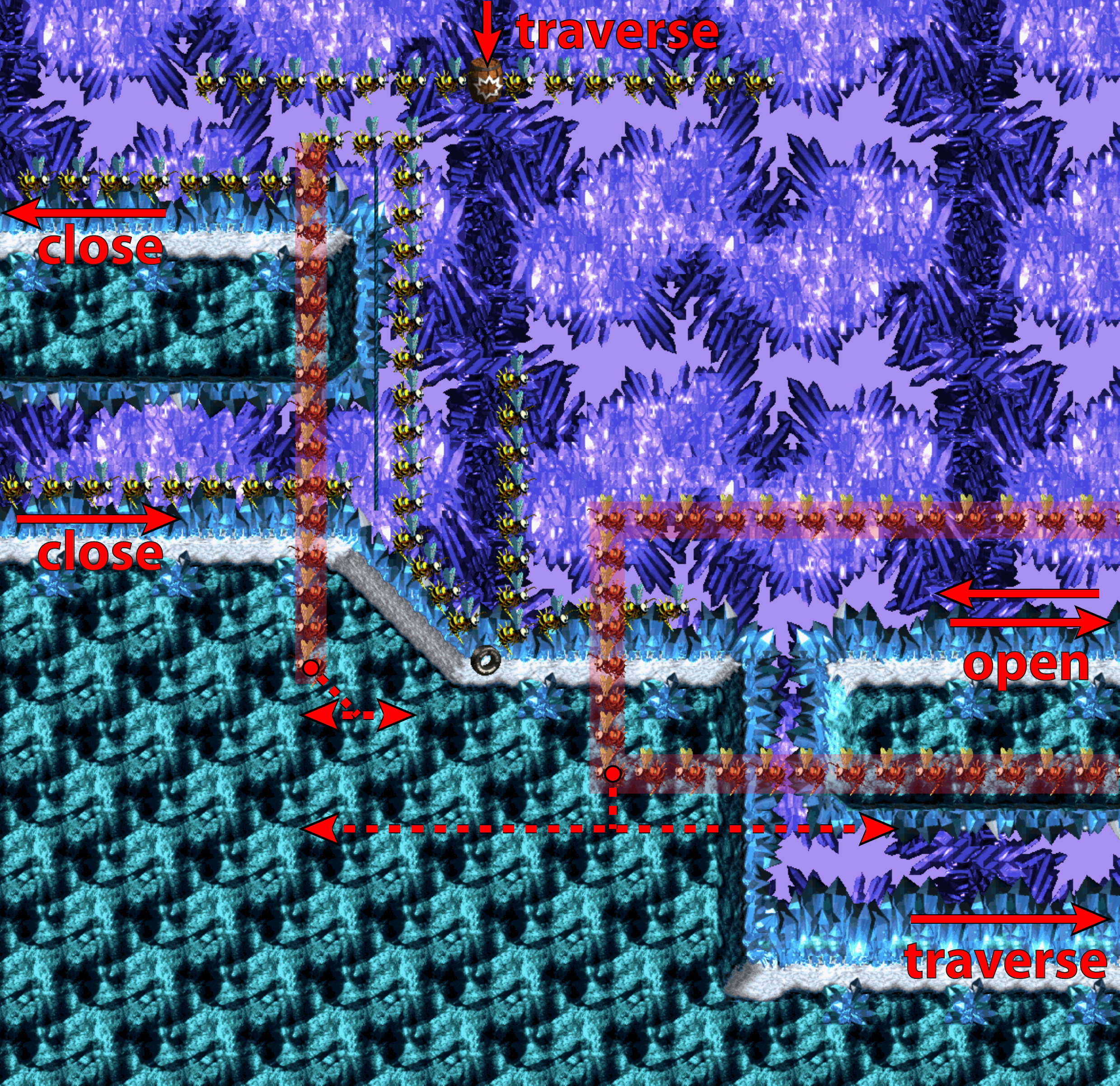}
\caption{Door gadget for Donkey Kong Country 1}
\label{PSPACEDKC1}
\end{figure}

\ifabstract
Using a few different gadgets, Theorem~\ref{DKC1 PSPACE} extends to
Donkey Kong Country~2 and~3; see Appendix~\ref{donkey appendix}.
In addition,
Appendix~\ref{donkey NP} gives a simple proof of NP-hardness of
Donkey Kong using the framework of Section~\ref{s:platform_framework}.
\fi

\later{
\ifabstract
  \section{PSPACE-completeness of Donkey Kong Country sequels}
\else
  \paragraph{Donkey Kong Country sequels.}
\fi
\label{donkey appendix}

\begin{theorem}
It is PSPACE-complete to decide whether the goal is reachable from the start of a stage in generalized Donkey Kong Country~2.
\end{theorem}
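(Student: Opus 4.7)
The plan is to apply the PSPACE-hardness framework from Section~\ref{s:platform_framework} essentially verbatim, so the entire task reduces to exhibiting a Crossover gadget and a Door gadget built out of Donkey Kong Country~2 elements, plus noting PSPACE membership. As in the previous theorem, I would assume the player controls only a single Kong by placing a DK Barrel at the start followed by a wall of unkillable Zingers, so that only one Kong can ever advance past the entry point. Membership in PSPACE follows from the general remarks in the Introduction, since the game engine is a deterministic function of the player's inputs and the game state has polynomial size.

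For the Crossover, I would reuse the Automatic Barrel Cannon construction from the NP-hardness proof of Donkey Kong Country (Theorem~\ref{t:dkc}): each of the four entrances/exits of the crossing has a pair of facing Automatic Barrel Cannons, so that a Kong entering one side is immediately blasted straight through to the opposite side without any possibility of deviating onto the transverse path. This gadget works identically in DKC2 since Automatic Barrel Cannons and the relevant blasting mechanics are unchanged.

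For the Door gadget, the idea is to mimic the DKC1 construction of Figure~\ref{PSPACEDKC1} but substitute the Tire-on-ice mechanism with something native to DKC2. A natural candidate is to use a pushable object such as a Kannonball or a TNT Barrel sliding along a sloped icy surface (DKC2 retains both icy terrain and pushable barrel-like objects), together with hooks or ropes for the ``open'' and ``close'' paths and walls of static Zingers to prevent shortcuts. The three disjoint paths (traverse, open, close) are arranged exactly as in the DKC1 gadget: the traverse path requires that the pushable object has been moved up the slope, the open path lets the player push it up, and the close path (accessed via a rope or hook from above) causes the object to be dislodged and slide back down, re-blocking the traverse.

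The main obstacle will be Dixie Kong's helicopter spin, which lets her slow her descent and potentially glide across gaps that were intended to be one-way drops. Restricting the player to a single Kong via the red Zinger wall does not by itself eliminate this issue if that single Kong happens to be Dixie, so I would instead place the starting DK Barrel configuration so that Donkey Kong (who cannot glide) is the only Kong available, or else oversize the vertical gaps and surround glide-reachable ledges with Zingers so that hovering cannot reach any unintended platform. A secondary subtlety is that in DKC2 many stages involve water and swimming, so the gadgets must be kept on dry land; this is easily ensured by choosing a non-aquatic level theme. Once these small hazards are handled, the verification that the Door gadget satisfies the three required behaviors (traverse iff open, opening on the open path, closing on the close path, no leakage between paths) proceeds exactly as in the proof of Theorem~\ref{DKC1 PSPACE}, yielding PSPACE-hardness; combined with PSPACE membership this gives PSPACE-completeness.
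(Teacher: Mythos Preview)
Your overall plan (apply the framework of Section~\ref{s:platform_framework}, reuse the Barrel Cannon Crossover, invoke the PSPACE membership remarks) matches the paper exactly, but the heart of the proof is the Door gadget, and there your proposal has real problems rather than just a different flavor.

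First, a factual error: Donkey Kong is not a playable character in Donkey Kong Country~2. The two Kongs are Diddy and Dixie, so you cannot ``place the starting DK Barrel configuration so that Donkey Kong \dots\ is the only Kong available.'' You would have to handle Dixie's helicopter spin some other way (the paper's gadget simply surrounds everything with Zinger walls so that hovering is fatal).

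Second, the objects you propose for the door state do not have the mechanics you attribute to them. TNT Barrels are single-use (they detonate), so they cannot serve as a persistent, repeatedly toggled door state. Kannonballs are carried and thrown, not pushed along the ground, and they do not slide on ice the way the DKC1 Tire does; there is no analogue of ``push it up the slope / let it slide back down'' for them. So the direct port of Figure~\ref{PSPACEDKC1} with these substitutes does not go through, and you have not actually exhibited a working Door gadget.

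The paper takes a genuinely different route for the Door: it uses a Hot Air Balloon whose horizontal position encodes the open/closed state. The ``traverse'' path shoots the player downward between two swarms of moving red Zingers; if the Balloon is in the closed position the player lands on it and is trapped among Zingers, otherwise they pass through. The ``open'' path lets the player climb a Chain, jump onto the Balloon, and pull it left so it drifts down to a safe resting spot (position~A). The ``close'' path drops the player onto the Balloon (if present at~A) and forces them to steer it right into a long Hot Air Current that carries it back up to the closed position. This exploits mechanics specific to DKC2 (Balloons and Hot Air Currents) rather than trying to reproduce the DKC1 Tire behavior.
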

\begin{proof}
We use an object from the game called a Hot Air Balloon, which works as follows: it acts as a platform
on which the player may stand, and by default it slowly descends, unless it is above a Hot Air Current
(a pillar of hot air), which forces it to ascend --- while in the air, the player may influence the horizontal
direction of the balloon.

The Door gadget is illustrated in Figure~\ref{PSPACEDKC2}, where the door is closed if the blue Hot Air Balloon is positioned as depicted, and is open if the Balloon lies in the position marked with letter A.

When the player attempts to traverse the door from the Barrel above, they are shot between the two swarms of red Zingers (each Zinger moves left and right in a sinusoidal fashion), until they safely reach the Barrel below, if the door is open. Otherwise, if the door is closed, the player will land on the Balloon and will be stuck among Zingers.

If the player comes from the ``open'' entrance, they can decide to open the door, if it is closed.  To do so, they climb on the Chain, wait for the red Zingers to move to the right, jump on the Balloon and quickly pull it to the left. Then the player jumps back to the Chain before hitting the red Zingers below, while the Balloon floats downward, across all the Zingers, and stops right above the lower left Barrel (letter A in Figure~\ref{PSPACEDKC2}).

If the player comes from the ``close'' entrance, and the door is already closed, they will be shot down (avoiding the Zingers, if the timing is right) and safely reach the ``close'' exit.  Otherwise, if the door is open, the player will land on the Balloon and will have to quickly maneuver it to the right to escape the
red Zingers, until the long Hot Air Current will send the Balloon up (before the player has a chance to drive it back to the left) and it will ultimately close the door, while the player exits to the right upon entering the Barrel.

Membership in PSPACE follows from the remarks in the Introduction.
\end{proof}

\begin{figure}[htbp]
\centering
\includegraphics[width=0.8\linewidth]{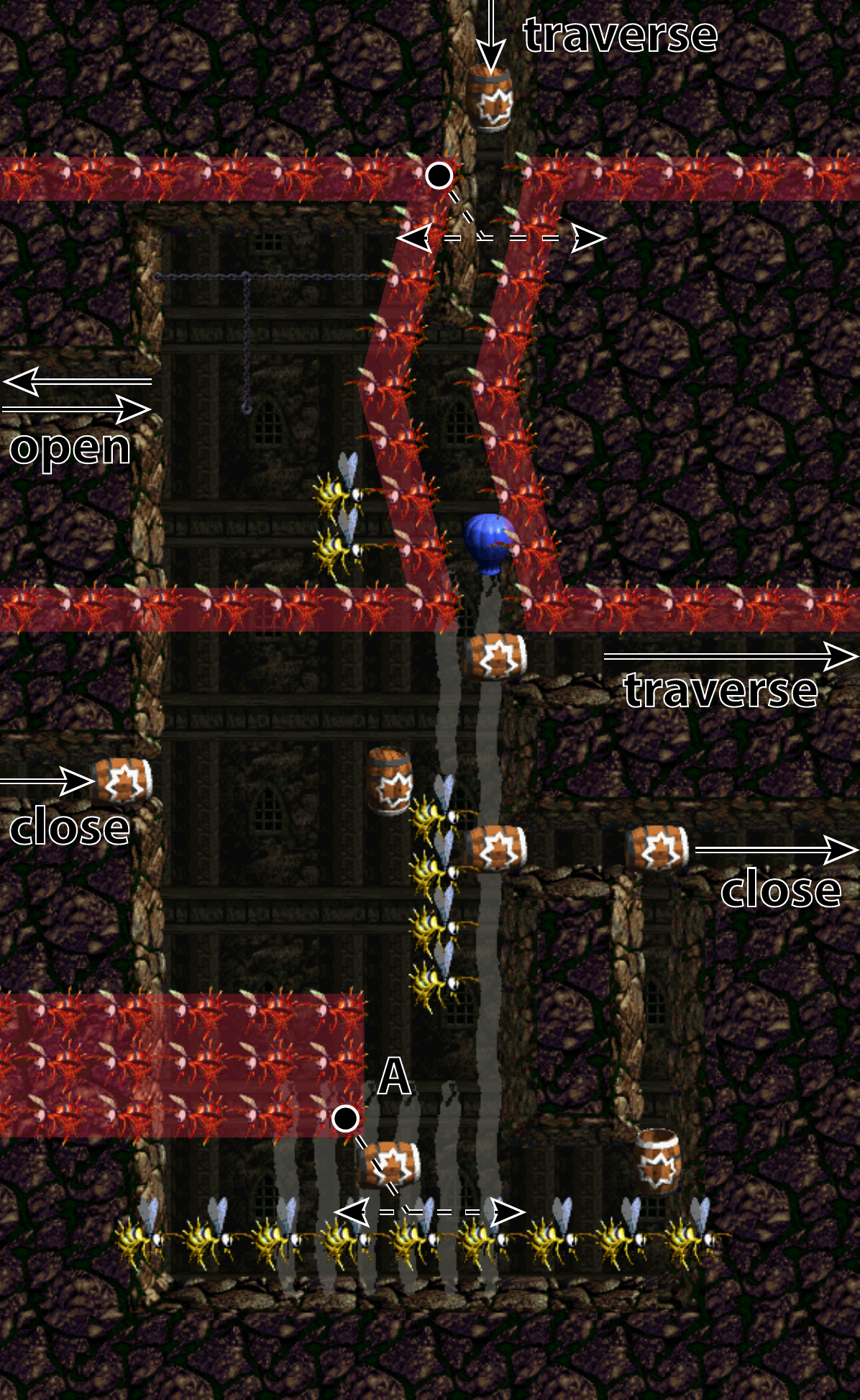}
\caption{Door gadget for Donkey Kong Country~2}
\label{PSPACEDKC2}
\end{figure}

\begin{theorem}
It is PSPACE-complete to decide whether the goal is reachable from the start of a stage in generalized Donkey Kong Country\ 3.
\end{theorem}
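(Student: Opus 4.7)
The plan is to reuse the PSPACE-hardness framework of Section~\ref{s:platform_framework} exactly as in Theorem~\ref{DKC1 PSPACE} and its Donkey Kong Country~2 counterpart. Since Automatic Barrel Cannons and Zinger-like enemies (the Buzzes of Donkey Kong Country~3) are present, the Crossover gadget used in the NP-hardness proof and the usual one-way drops and enemy barriers transfer over with no modification, and membership in PSPACE is immediate from the remarks in the Introduction. The entire argument thus reduces to exhibiting a single Door gadget built from mechanics native to Donkey Kong Country~3.

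For the Door gadget I would adapt the Tire/Balloon idea of the previous two proofs to a heavy movable object available in Donkey Kong Country~3, such as a Steel Keg resting on an ice slope. In the closed state the Keg sits in a narrow horizontal channel lined with stationary Buzzes, so that a Kong arriving via the Automatic Barrel at the top of the traverse path lands on top of the Keg and is trapped among the Buzzes; in the open state the Keg rests at the top of the slope, leaving the channel clear so that the Kong can crouch-slide or roll beneath the Buzzes to the exit Barrel. The open auxiliary path lets the player walk up from below, shove the Keg to the top of the slope, and then exit via a one-way Barrel Cannon. The close auxiliary path runs along a rope above the slope and is timed between swarms of red, left-moving patrolling Buzzes; any traversal from left to right inevitably bumps the Keg off the ledge, sending it down the ice back into the blocking position, after which the player drops into the close exit Barrel.

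The main obstacle, as in the previous two PSPACE constructions, will be arranging the geometry so that the three corridors of the Door gadget are effectively disjoint regardless of the order in which the player visits them. In particular, one must verify that a player who arrives via the close path cannot drop into the traverse channel without first closing the door, that a player on the open path cannot shortcut into the exit, and that no Barrel, rope, or ledge is reachable from an unintended direction. Once the placement of the Buzz swarms, the forward/backward Automatic Barrel Cannons at the three entrances, and the slope profile are fixed, the case analysis follows the pattern established for Donkey Kong Country~1 and~2 and is routine. Applying the generic framework of Section~\ref{s:platform_framework} with this Door gadget and the standard Crossover then yields the required reduction from TQBF, completing the PSPACE-hardness argument; PSPACE membership is inherited from the general remarks in the Introduction.
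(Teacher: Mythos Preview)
Your high-level plan matches the paper's exactly: reuse the PSPACE framework of Section~\ref{s:platform_framework}, carry over the Barrel-Cannon Crossover, and supply a single DKC3 Door gadget, with PSPACE membership from the Introduction. The divergence is entirely in the Door gadget.

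The paper does \emph{not} transplant the Tire/Balloon idea. It instead exploits two DKC3-specific mechanics: the Tracker Barrel and timed Handle/Gate pairs. A Tracker Barrel repeatedly fires the Kong upward and slides horizontally to re-catch them, halting only at the leftmost or rightmost end of its track; its horizontal position (left = open, right = closed) \emph{is} the door state. The traverse path drops onto the track: if the barrel is beneath, the Kong is caught and can ride it to the exit; otherwise the Kong falls into a dead corridor. The open and close paths each force the Kong into the Tracker Barrel from one side, so that the barrel is necessarily dragged all the way to the opposite extreme before the Kong can leave. Numbered Handles that briefly open matching Gates keep the three corridors separated.

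Your Steel-Keg-on-ice gadget has a concrete mechanical gap. Steel Kegs in DKC3 are not pushed like DKC1 Tires: they are picked up and thrown, after which they roll and bounce off walls indefinitely rather than coming to rest, and Kongs can ride but not ``shove'' them. The actions you rely on---``shove the Keg to the top of the slope'' and ``bump the Keg off the ledge''---do not correspond to actual in-game moves, and the two stable resting positions your door state depends on may simply not exist for this object. Unless you replace the Keg with a genuinely bistable DKC3 element (which is exactly what the Tracker Barrel provides for free), the construction does not go through; this is precisely the step your sketch defers as ``routine.''
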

\begin{proof}
The Door gadget is shown in Figure~\ref{PSPACEDKC3}, where the door is open if and only if the Tracker Barrel is on the left, as in the picture. Operating each Handle opens the Gate with the same number, which then closes again after a few seconds.

Coming from the ``traverse'' entrance, if the player finds the Tracker Barrel below, they can use it to reach the ``traverse'' exit; otherwise they fall in the bottom corridor and they are stuck (jumping back up is impossible because it is too high).

Kongs cannot ``escape'' a Tracker Barrel after they enter it, because it keeps shooting them in the air, but follows their movements horizontally wherever they go, catching them again as they fall back down. The Tracker Barrel stops only if the Kongs enter a new Barrel after they are shot (which is irrelevant here) or if the Tracker Barrel itself reaches its leftmost or rightmost position. For this reason, coming from the ``close door'' entrance when the door is open forces the player to move the Tracker Barrel all the way to the right, hence closing the door. Likewise, coming from the ``open door'' entrance when the door is closed makes the player move the Tracker Barrel all the way to the left, opening the door.

Membership in PSPACE follows from the remarks in the Introduction.
\end{proof}

\begin{figure}[htbp]
\centering
\includegraphics[width=\linewidth]{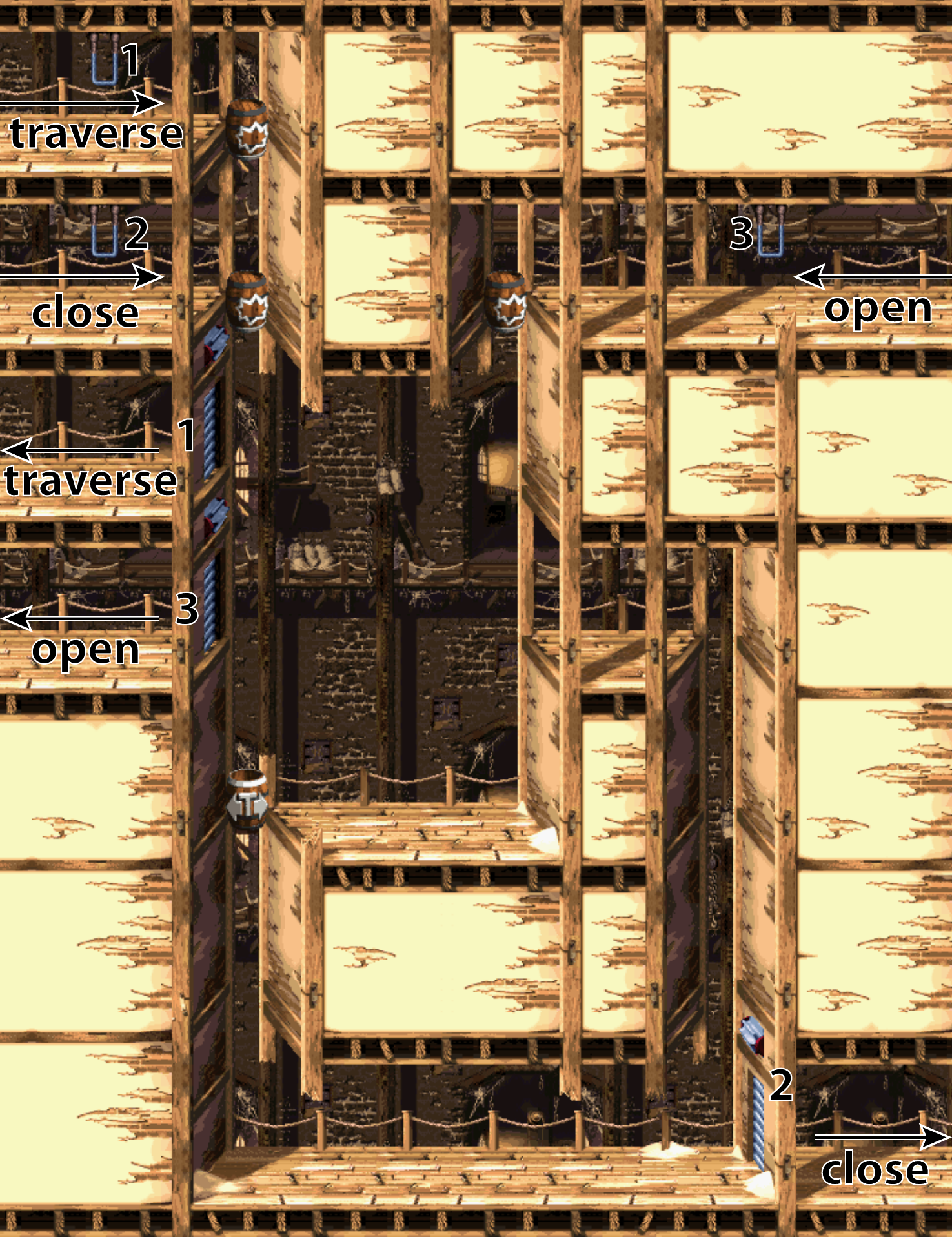}
\caption{Door gadget for Donkey Kong Country 3}
\label{PSPACEDKC3}
\end{figure}

}

\section{The Legend of Zelda}\label{s:zelda}

The Legend of Zelda is a high fantasy flip-screen overhead adventure
video game developed by Nintendo and published for the NES in
1986. Like the Super Mario and Donkey Kong series, the Zelda series
has spawned over a dozen sequels across various gaming platforms. The
series features a young male swordsman named Link, usually clad in a
green tunic. In the original Legend of Zelda game, the player controls
Link and ventures on a quest through dungeons, collecting items and
defeating enemies and bosses to rescue the kidnapped Princess Zelda
from the antagonist,~Ganon. The premise of the SNES sequel, The Legend
of Zelda: A Link to the Past, is roughly the same.
Here we consider both versions, although we do not describe in much detail the
items or enemies found in the games.  Instead, we refer the interested
reader to~\cite{zeldawiki} and~\cite{zeldawiki2}.  Unlike the previous
games we studied, The Legend of Zelda is nonlinear and the player
must acquire new items from dungeons in order to access previously
inaccessible areas. The game world is set in an overworld, Hyrule,
with self-contained dungeons scattered across the world.

We have several different types of proofs, using different elements found
in Zelda games. Pushable blocks play a large role in many puzzles in
Zelda games, and we use these elements to prove that the original
Legend of Zelda is NP-hard and that several later games are
PSPACE-complete. The hookshot is also a recurring item in Zelda games
(although it is absent from the original). With the hookshot, Link
can grapple onto and pull himself towards distant objects. We use this
item to prove that the rest of the Zelda games are NP-hard.
Finally, an element found in most Zelda dungeons are switch-operated doors,
which imply that some Zelda games (most notably, Legend of Zelda: A Link
to the Past) are PSPACE-complete.


\later{
\ifabstract
  \section{NP-hardness of Zelda games}
\else
  \subsection{NP-hardness}
\fi
\label{zelda NP}

\paragraph{Block pushing.}

Generalized Legend of Zelda is NP-hard by reduction from a puzzle
similar to Push-1, because Legend
of Zelda contains blocks which may be pushed according to the same rules as in
Push-1~\cite{Push}, except that in Zelda, each block may be pushed at most once.
Fortunately, all of the gadgets in the reduction for Push-1 found in~\cite{Push}
still function as intended when each block can be pushed at most once, with
the possible exception of the Lock gadget. However, a simple modification
to the Lock gadget (illustrated in Figure~\ref{fig:zelda_lock}) suffices.
(Here we assume that Link has no items, in particular, no raft.)
\begin{figure}[htbp]
\centering
\includegraphics{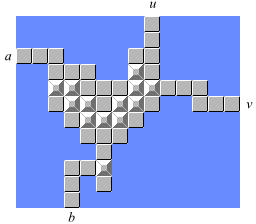}
\caption{Modified Lock gadget for block pushing in Legend of Zelda}
\label{fig:zelda_lock}
\end{figure}
Therefore we obtain
\begin{theorem}\label{t:zelda1}
It is NP-hard to decide whether a given target location is reachable from a
given start location in generalized Legend of Zelda.\hfill\qed
\end{theorem}

Several of the later Zelda games contain ice block puzzles, which will be considered later,
in the PSPACE-completeness section.

\paragraph{Hookshot.}
Unfortunately, the block-pushing reduction does not extend to many games of the
Legend of Zelda series,
because the rules of Push-1 do not allow the player to pull blocks, which is
possible in those games. We now give a proof of NP-hardness for
the SNES sequel, The Legend of Zelda: A Link to the Past, taking advantage
of the hookshot item, which allows Link to grapple onto certain distant objects
and pull himself towards them. Because Link cannot hookshot onto very far objects,
in our reduction we will require him to hookshot at most seven tiles away.
However, all our gadgets keep working even if the hookshot distance is unbounded.

\begin{theorem}\label{t:zelda2}
It is NP-hard to decide whether a given target location is reachable from a given
start location in generalized Legend of Zelda: A Link to the Past.
\end{theorem}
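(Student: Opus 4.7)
The plan is to apply the general NP-hardness framework of Section~\ref{s:platform_framework}, so it suffices to implement Start, Finish, Variable, Clause, and Crossover gadgets using only terrain features and the hookshot mechanic. The Start gadget simply places Link in a room with the hookshot (and nothing else, so that we do not need to worry about bombs, arrows, the Pegasus boots, etc., trivializing walls or gaps); the Finish is a plain exit reachable only after the Check path. The fundamental building block for everything else is a \emph{one-way passage}: since the hookshot only pulls Link toward a hookshot-able target (a torch, pot, block, etc.) and only along a clear straight line, a gap with a target on the far side but no target on the near side can be crossed in exactly one direction.

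With one-way passages available, the Variable gadget is built by placing Link on a small elevated platform from which two distinct hookshot targets are visible in opposite directions, each within the 7-tile range, leading to the literal paths $x_i$ and $\neg x_i$. Once Link grapples to one of them, the original platform is out of range (and visually blocked by a low wall), so the other literal is permanently inaccessible. The entry from the previous variable drops Link onto this platform from above, preventing backtracking. The Clause gadget, in turn, exploits the dual role a hookshot target can play: each literal entrance terminates at a spot where Link can perform a one-time irreversible action (for example pushing a pot into a fixed slot, or lifting a stone to expose a torch) that places or reveals a hookshot target spanning a gap on the Check path. The Check path consists of a long row of such gaps, one per clause; Link can cross the $j$-th gap if and only if some literal entrance of clause~$C_j$ has been visited, which is precisely the unlock semantics required by the framework.

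The main obstacle will be the Crossover gadget, because hookshot targets are visible across long straight corridors and therefore leak naturally between crossing passages. The plan is to exploit Remarks~\ref{r:xover} and~\ref{r:xover2} and build only a unidirectional crossover, with walking as one axis and hookshooting as the other: the walking path passes through a narrow crouched/slit corridor whose walls block line-of-sight to any target belonging to the hookshot axis, while the hookshot axis shoots vertically onto a target placed on a shelf that cannot be reached (nor its line-of-sight exploited) from the walking corridor. The delicate point will be confirming that every possible hookshot angle from within each passage hits only a target belonging to the \emph{same} passage, and that the one direction of leakage that is unavoidable (picking up a newly exposed hookshot anchor after the perpendicular traversal) falls within the concession granted by Remark~\ref{r:xover2}. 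Once the crossover is verified this way, the reduction from 3-SAT goes through exactly as in the framework, yielding the theorem.
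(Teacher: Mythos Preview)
Your overall approach---applying the framework of Section~\ref{s:platform_framework} with the hookshot providing one-way passages, a two-target branch for the Variable gadget, and a push-a-block-to-create-a-grapple-anchor Clause gadget---is essentially the same as the paper's. The paper uses treasure chests as the hookshot targets and a pushable block (not a pot or liftable stone) as the object the literal visit shifts into the Check-path gap; otherwise your Variable and Clause sketches match theirs closely.

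The one substantive divergence is the Crossover. You flag it as ``the main obstacle'' and outline a careful mixed walking/hookshot construction with line-of-sight blocking, deferring the verification. In the paper this is in fact the \emph{easiest} gadget: the hookshot already gives a native crossover. Put a pit at the intersection with a grapple target on each of the four sides; Link can hookshot straight across east--west or north--south, but cannot change direction over the pit, so there is no leakage in either direction and no need to invoke Remarks~\ref{r:xover} or~\ref{r:xover2}. Recognizing this collapses what you anticipated as the delicate part of the argument into a one-line observation, and it removes the unverified ``confirming every possible hookshot angle'' step from your plan. With that replacement your proposal is complete and coincides with the paper's proof.
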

\begin{proof}
The only elements of the game we use are the following. We use
treasure chests and blocks, which will serve as grapple targets.
Link starts out with the hookshot (which can grapple onto
treasure chests and pull Link towards them).

Although this game is not technically a platform game, we use the same
framework provided in Section~\ref{s:platform_framework} here to
reduce from 3-SAT. We do not need any special Start or Finish gadgets.
The Variable gadget, illustrated in Figure~\ref{fig:zelda_var}, works
as follows.  Link approaches from either the top left or top right,
depending on which value of the previous variable was chosen. Then
Link hookshots onto the chest in the top center, and finally hookshots
onto one of the two chests at the bottom. Once Link has reached one of
the two bottom chests, the other one becomes unreachable. Observe how
several barriers around corridors prevent Link from hookshotting
onto the chests from undesirable directions.
%

\begin{figure}[htbp]
\centering
\begin{minipage}{0.45\linewidth}
\centering
\includegraphics{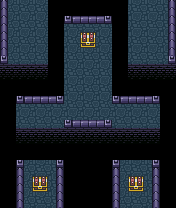}
\caption{Variable gadget for Zelda}
\label{fig:zelda_var}
\end{minipage}\hfill
\begin{minipage}{0.45\linewidth}
\centering
\includegraphics{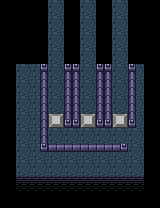}
\caption{Clause gadget for Zelda}
\label{fig:zelda_clause}
\end{minipage}
\vspace{-1ex}
\end{figure}

The Clause gadget is illustrated in Figure~\ref{fig:zelda_clause}. The three
corridors at the top correspond to the literals that appear in the clause.
When Link visits one of these corridors, he may push the block forward,
which allows him to hookshot onto the block from the right later on
while traversing the Check path; see Figure~\ref{fig:zelda_check}. Note
that the leftmost barrier on the Clause gadget prevents Link from
``skipping'' unsatisfied clauses, even if he can hookshot at arbitrarily
long distances.
%

\begin{figure}[htbp]
\centering
\includegraphics[width=\linewidth]{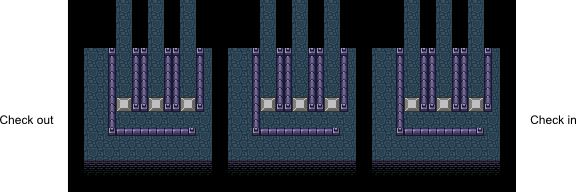}
\caption{Check path for Zelda (with three clauses)}
\label{fig:zelda_check}
\end{figure}

%

Finally, Crossover gadgets are natively implemented in the game, as illustrated in Figure~\ref{fig:zelda_xover_new}.
\end{proof}

\begin{figure}[htbp]
\centering
\includegraphics[width=0.65\linewidth]{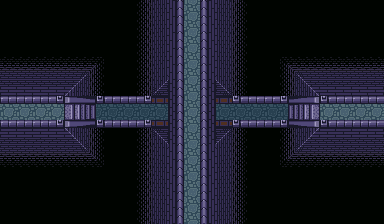}
\caption{Crossover gadget for Zelda}
\label{fig:zelda_xover_new}
\end{figure}

Theorem~\ref{t:zelda2} extends easily to all later games
(Link's Awakening, Ocarina of Time, Majora's Mask, Twilight Princess, etc.)\
because the
hookshot and blocks are present. In the case of the 3D games, we attach
a hookshot target onto the block (as blocks themselves are no longer valid
hookshot targets), and add walls to the sides of platforms, because in 3D games
Link can hookshot towards any visible direction.

\paragraph{Zelda II.}
Zelda II: The Adventure of Link is unique in that it features side-scrolling
levels, with none of the usual mechanics in the top-down ``overworld''
that connects levels together.
In the side-scrolling levels, Link is a $2 \times 1$ character who experiences
gravity and can jump by a bounded height amidst a square grid of blocks and
empty space, similar to Super Mario Bros.
Link can also attack monsters, gain experience, etc.,
but we do not need monsters for proving NP-hardness.
In fact, we just need one main feature---keys and locked doors---as
illustrated in Figure~\ref{fig:zeldaII}.
Keys are identical items that Link can permanently pick up.
Locked doors are impassable, unless Link has a key, in which case
touching the door consumes one key and makes the door permanently disappear.
(Master keys and the Fairy spell are other ways to traverse doors, but we can
 assume they are unavailable to Link in the levels we create.)

\begin{figure}[htbp]
\centering
\includegraphics[scale=1]{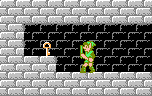}\hfil
\includegraphics[scale=1]{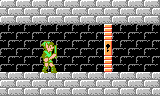}
\caption{Keys and locked doors in Zelda II: The Adventure of Link}
\label{fig:zeldaII}
\end{figure}

\begin{theorem}\label{t:zeldaII}
It is NP-hard to decide whether a given target location is reachable from a given
start location in generalized Legend of Zelda II: The Adventure of Link.
\end{theorem}
\begin{proof}
We apply Metatheorem~3.c from~\cite{Games} because Zelda II keys and
locked doors provide exactly the ``cumulative keys'' model required by the
metatheorem; the metatheorem also requires one-way paths, which we can
build with a big fall as in the Super Mario Bros.\ Variable gadget
(Figure~\ref{fig:mario_var}).
For completeness, the reduction is from Hamiltonian cycle
in directed planar graphs of degree~$3$; hence each vertex has either
two incoming edges and one outgoing edge, or vice versa.
The reduction represents each vertex by an area with two keys, and represents
each directed edge by a one-way path obstructed by a single locked door,
except for one forced edge (e.g., the sole outgoing or sole incoming edge
of any vertex) which is blocked by $|V|$ locked doors.
\end{proof}
}

\subsection{PSPACE-completeness}

\paragraph{Ice blocks.}
Recall that several Zelda dungeons contain ice blocks, which are pushed
like normal blocks, except when pushed they slide all the way until they encounter
an obstacle. These are the same rules as in PushPush1, which is PSPACE-complete,
hence we get the following theorem.

\begin{theorem}\label{t:zelda_pspace}
It is PSPACE-complete to decide whether a given target location is reachable
from a given start location in the generalized versions of the following
games: Ocarina of Time, Majora's Mask, Oracle of Seasons, The Minish Cap,
and Twilight Princess.\hfill\qed
\end{theorem}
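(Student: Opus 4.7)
The plan is to reduce from PushPush-1, which is PSPACE-complete~\cite{PushPush}, exploiting the fact that ice blocks in the listed Zelda games obey precisely the PushPush-1 pushing rule: when the player pushes an ice block, it slides in the pushed direction until it collides with a wall or another block. Membership in PSPACE follows immediately from the general argument in the introduction, since each of these games has a polynomial-size game state and a deterministic engine, so we need only focus on hardness.

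First, I would briefly recall the PushPush-1 setup: a rectangular grid containing fixed walls, a set of movable blocks, a start cell, and a target cell; the player can move in four cardinal directions and, when moving into a block, pushes it so that it slides until it encounters an obstacle. The reduction is essentially a ``transliteration'': given a PushPush-1 instance, I construct a Zelda dungeon room (or set of interconnected rooms) whose floor tiles correspond to the PushPush-1 grid cells, with wall tiles placed exactly where the PushPush-1 walls are, and an ice block placed on each cell that contains a PushPush-1 block. The start and target cells become Link's starting location and the designated goal location, respectively. Because ice-block sliding in these games is governed by the same local rule as PushPush-1 pushing, configurations and transitions correspond bijectively, and reachability of the target in PushPush-1 is equivalent to reachability of the goal for Link.

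Next, I would verify game by game that the necessary primitives (ice blocks with slide-on-push behavior, immovable walls, and an unobstructed floor) are available in each of Ocarina of Time, Majora's Mask, Oracle of Seasons, The Minish Cap, and Twilight Princess, citing~\cite{zeldawiki} and~\cite{zeldawiki2} for the relevant ice-block puzzle rooms. To make the reduction self-contained, I would also restrict attention to a state in which Link has no other relevant items (no hookshot target on the blocks, no bombs that could destroy them, etc.), either by choosing dungeon contexts where Link arrives without these items or by surrounding the construction with obstacles (walls, bottomless chasms) that prevent any alternative interactions with the ice blocks; the point is that the only Link--block interaction available is the standard push, matching PushPush-1.

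The main obstacle, and the part that would need the most care, is the 3D games (Ocarina of Time, Majora's Mask, Twilight Princess), where Link can attack, use items, or move in ways not available to the PushPush-1 player. The fix is to argue that none of these additional actions can affect an ice block in the relevant dungeon contexts (ice blocks there are impervious to sword strikes and other standard weapons), so the set of reachable configurations is exactly the set reachable under PushPush-1 rules. Once that is established for each listed game, the reduction goes through uniformly, giving PSPACE-hardness, and together with PSPACE membership we obtain PSPACE-completeness.
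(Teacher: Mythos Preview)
Your proposal is correct and follows exactly the paper's approach: the paper's entire argument is the observation that ice blocks in these Zelda games obey the PushPush-1 sliding rule, so PSPACE-hardness follows directly from the PSPACE-completeness of PushPush-1~\cite{PushPush}, with membership in PSPACE coming from the general remarks in the introduction. If anything, you have supplied considerably more detail than the paper itself, which states only the one-sentence reduction and does not separately address the 3D games or the need to restrict Link's inventory.
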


\paragraph{Doors and switches.}
Most Zelda dungeons feature switch-operated doors. The following proof works for Legend of Zelda: A Link to the Past, but may be adapted to several other Zelda games. We use the general PSPACE-hardness framework detailed in Section~\ref{s:platform_framework}.

\paragraph{Teleporter tiles.}
In our proof of the following theorem (PSPACE-completeness of LoZ: ALttP)
we use teleporter tiles from the game which, when stepped on, sends Link to
a fixed destination. The teleporter tiles are one-way.

\begin{theorem}
It is PSPACE-complete to decide whether a given target location is reachable from a given start location in generalized Legend of Zelda: A Link to the Past.
\end{theorem}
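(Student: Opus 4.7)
The plan is to instantiate the general PSPACE-hardness framework from Section~\ref{s:platform_framework}: we only need to implement a Door gadget and a Crossover gadget in generalized Legend of Zelda: A Link to the Past. Membership in PSPACE follows immediately from the remarks in the Introduction, since the game engine is deterministic given player input and the game state has polynomial size in the stage size.

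For the Crossover gadget, I would reuse the native crossover already observed in the proof of Theorem~\ref{t:zelda2} and pictured in Figure~\ref{fig:zelda_xover_new}, which allows two orthogonal paths to intersect without leakage in either direction. This is the easy half of the construction.

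For the Door gadget, I would exploit the fact that A Link to the Past contains doors that are opened and closed by switches (crystal switches and/or pressure plates). The traverse path is a short corridor blocked by a switch-operated door that is passable exactly when the door is in its ``open'' state. The open path and the close path each end at a switch (or a short sequence of switches) mounted in a dead-end alcove, so that after operating the switch the player is forced to retrace to the entry and continue to the corresponding exit; this dead-end simplification of the open/close mechanisms is justified by the final remarks in Section~\ref{s:platform_framework} and by Remark~\ref{r:door}. Each of the three corridors is fully enclosed by walls to guarantee no leakage, and wherever the corridors must pass through one another geometrically we insert the native Crossover gadget.

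The main technical obstacle is engineering unambiguous ``open'' and ``close'' actions out of Zelda's switches, which by default only toggle. The cleanest resolution I anticipate is to use two distinct switches wired to the same door: one reached by a corridor whose return passage is itself gated by a barrier that closes only when the door becomes open, so that once the door is open the player is steered out via the open exit rather than permitted to toggle again; and symmetrically for the close side. An alternative realization uses a pushable block forced by narrow one-way corridors to travel between two pressure-plate sockets, with only one direction of push reachable from each of the open and close entrances. In either case, the delicate part of the verification is to check that throughout the TQBF reduction, where the gadget is entered many times during backtracking and in various door states, no combination of operations leaks between the three paths or leaves the door in an inconsistent state; once that is established, the rest of the reduction follows directly from the framework.
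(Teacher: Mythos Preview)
Your high-level approach matches the paper's exactly: instantiate the PSPACE framework with the native Crossover gadget and a switch-based Door gadget, with membership in PSPACE handled by the remarks in the Introduction. You also correctly identify the central difficulty, namely that Zelda switches \emph{toggle} rather than \emph{set}, so turning them into honest open/close mechanisms is the whole content of the gadget.

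Where the paper diverges from your two sketched resolutions is in how it tames the toggle semantics. Rather than wiring two switches to one door or shuttling a block between pressure plates, the paper maintains an invariant with \emph{pairs} of Gates: Gates labeled~1 are open if and only if Gates labeled~2 are closed. Each of the open/close paths is then a two-route corridor: if the door is already in the desired state, the open Gates along one route let the player pass without touching any switch; if it is in the wrong state, the player is forced through the other route, which contains the two Floor Switches that flip both Gate families and hence the door. The crucial device you did not anticipate is how the paper enforces \emph{exactly one press per traversal}: each Floor Switch sits at the bottom of a pit, and the only way out is a Teleporter, so the player cannot linger and press twice. A separate initialization pass (followed by a single Crystal Switch hit that globally toggles Pillars) sets up the invariant before the TQBF machinery begins. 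Your two proposals are plausible and may well be realizable, but the leakage and consistency checks you flag as ``delicate'' are precisely where the paper's pit-plus-teleporter trick and complementary-gate invariant do the real work; absent a concrete layout, your sketch stops just short of the part that needs the most care.
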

\begin{proof}
In our generalization, we stipulate that the open-closed state of dungeon Gates that are operated by Floor Switches is ``remembered'' by the game, even if the player exits the room containing such Gates. Floor Switches toggle the open-closed state of some Gates, and Crystal Switches toggle the raised-lowered state of all the Pillars in the dungeon. The main difference between Gates and Pillars is that Gates are always closed when the player enters the dungeon, while Pillars may be either lowered or raised (which means that they are, respectively, traversable and not traversable).

\ifabstract
Crossover gadgets are natively implemented in the game (dungeons have two ``levels'', and Stairs to walk between levels),
\else
The Crossover gadget has already been described,
\fi
and the Door gadget is depicted in the lower part of Figure~\ref{PSPACEZelda}.

\begin{figure}[htbp]
\vspace{-0.2in}
\centering
\includegraphics[width=0.65\linewidth]{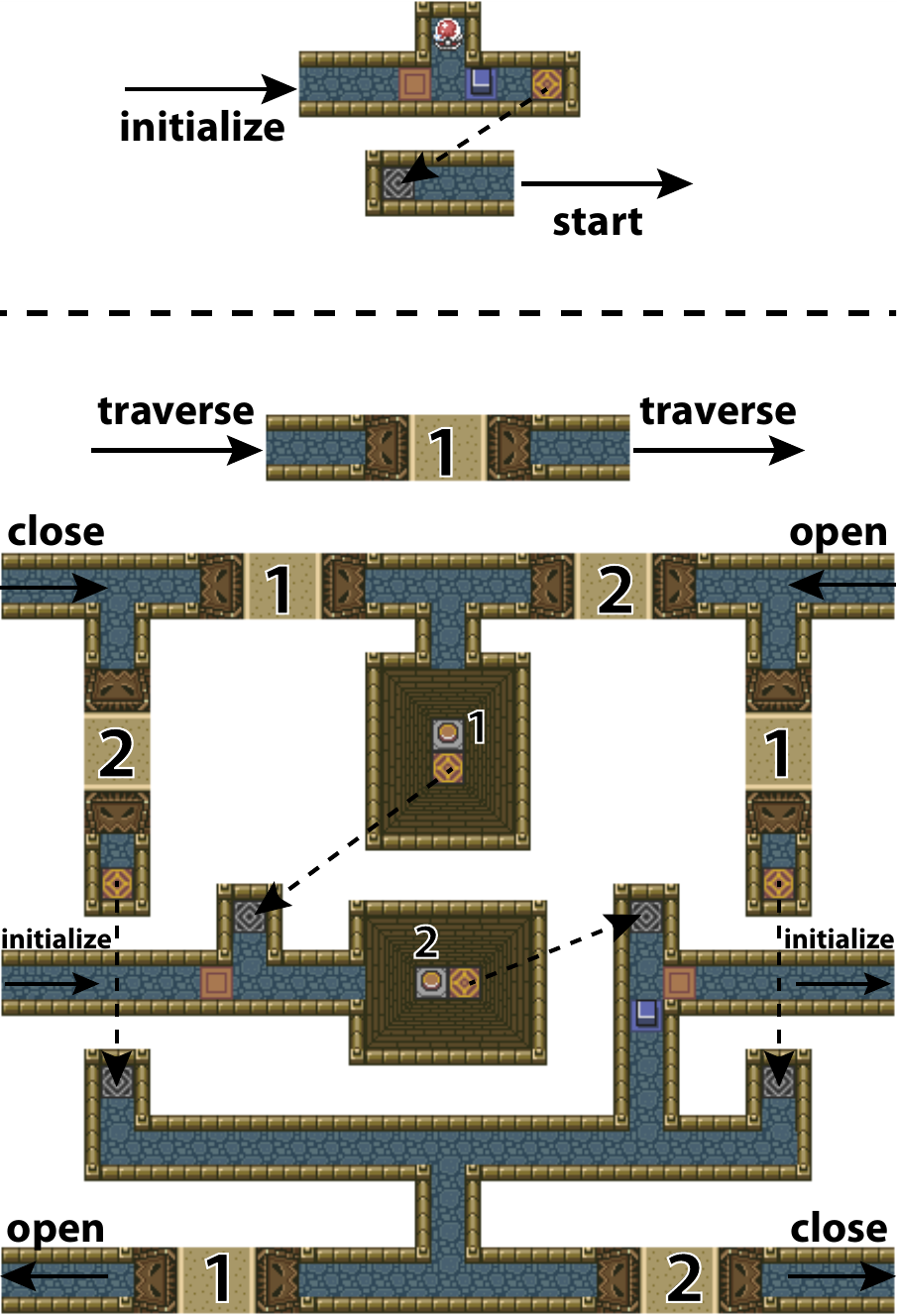}
\caption{Door gadget for Zelda}
\label{PSPACEZelda}
\vspace{-0.2in}
\end{figure}

The general construction is as follows. The player first traverses each Door gadget through its ``initialize'' path, thus pushing Floor Switch number~2 and
opening the corresponding Gates. Floor Switches are placed at the bottom of pits, so that the player must fall on them from above, and then immediately walk in the Teleporter, which implies that the player operates each Floor Switch exactly once per traversal.

After doing this for all Door gadgets, the player reaches the gadget on top of Figure~\ref{PSPACEZelda} and has to operate the Crystal Switch, which toggles the state of every Pillar. Then the player may enter the Teleporter, which is followed by a path that leads to the first Quantifier gadget, and to the rest of the construction described in~\cite[Metatheorem~4.c]{Games}.

From now on, in each Door gadget, the invariant is that Gates number~1 are open if and only if Gates number~2 are closed. The door is considered open if and only if Gates number~1 are open, hence the ``traverse'' path works trivially.

Keeping the invariant in mind, it is easy to see how the ``open'' and ``close'' paths work.  If the door is in the ``wrong'' state, then the player must take the route through the two Floor Switches and toggle the state of every Gate in the gadget.  If the door is in the ``correct'' state already, then the player must take the other route, which avoids the Floor Switches.  Again, each Floor Switch must be pressed exactly once per traversal.

Observe that paths in each Door gadget may also be stretched and deformed, in such a way that all the elements with the same number lie in the same dungeon room.

Membership in PSPACE follows from the remarks in the Introduction.
\end{proof}

\ifabstract
Appendix~\ref{zelda NP} gives a simple proof of NP-hardness of
the same Zelda game
using the framework of Section~\ref{s:platform_framework},
as well as observations that several other Zelda games are PSPACE-complete
via reduction from PushPush-1~\cite{PushPush}.
\fi

\later{
\section{Metroid}\label{s:metroid}

Metroid is an action-adventure video game co-developed by Nintendo
Research and Development~1 and Intelligent Systems, and published for
the NES in 1986. Like the previous games we have described, Metroid
continued to spawn numerous sequels across gaming platforms. The
original Metroid, as well as Super Metroid (SNES), were
two-dimensional side-scrolling flip-screen games. In Metroid, the
player controls Samus, a female space bounty hunter, and explores
planet Zebes to retrieve Metroids (life-draining creatures), which were
stolen by Space Pirates.  Like The Legend of Zelda, Metroid is an
open-ended nonlinear game in which players must discover and acquire
items in order to access previously inaccessible areas. However,
unlike the world of Zelda, the world of Metroid is essentially one
very large overworld with no nicely self-contained dungeons, and hence
is at first glance much less structured.

The mechanics of Metroid are also simple. Samus can walk, jump, or
shoot.  Samus may shoot left, right, or upwards (but not downwards).
In addition, when standing on the ground, Samus can enter Morph Ball mode,
which makes her one tile tall instead of two, and allows her to roll.
However, while in Morph Ball, Samus cannot attack or jump,
but only roll left or right and fall, and eventually morph back into her normal form.
The only enemy in the game we use in our proof is the
Zoomer, which is simply a creature that walks along whatever surface
it adheres to, and deals damage to Samus upon contact, but may be
killed by Samus' normal weapon. If two Zoomers going in opposite directions meet, they traverse each other without colliding. Samus starts with 30 health points, and loses 8 points upon contact with a Zoomer. When a Zoomer is killed it may drop an orb that, if collected by Samus, increases her health by 5 points. However, Samus' health may never increase above 99 points. 
For more details on the game,
we refer the interested reader to~\cite{metroidwiki}.

In this section, we prove the following:
\begin{theorem}\label{t:metroid}
It is NP-hard to decide whether a given target location is reachable
from a given start location in generalized Metroid.
\end{theorem}
\begin{proof}
We use the same framework given by Section~\ref{s:platform_framework}.
In our construction, Samus has acquired no special items yet except
the Morph Ball.  The Start and Finish gadgets are trivial, and the Variable gadget
is the same as in our proofs for Mario and Donkey Kong.

The first nontrivial gadget we implement is the Clause gadget
(Figure~\ref{fig:metroid_clause}). It has literals coming from below,
with the Check path on top. It works as follows.
Samus approaches from below, and can shoot upward
through the gap and kill all the Zoomers (which are moving clockwise
around the platform), freeing up the space for later traversal in Morph
Ball mode. After killing the Zoomers, Samus cannot jump and sneak into
the passageway, because it requires Morph Ball mode. Furthermore, if
no literal is visited, then Samus cannot pass through from above
because the corridor is completely filled with Zoomers. Indeed, if the central platform is large enough, there are enough Zoomers to kill Samus even if she has 99 health points.  Moreover,
Samus cannot traverse from the Check path down to the literal wires
because she would have to jump and morph in mid-air.

\begin{figure}[htbp]
\centering
\includegraphics[scale=0.75]{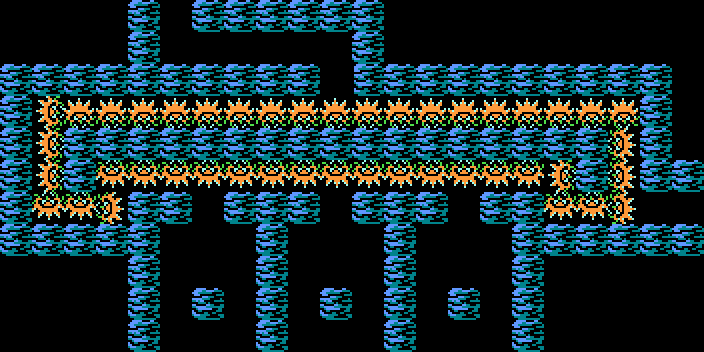}
\caption{Clause gadget for Metroid}
\label{fig:metroid_clause}
\end{figure}

The second nontrivial gadget is the Crossover gadget
(Figure~\ref{fig:metroid_xover}). It can be traversed from the top-left (respectively, top-right) corner to the bottom-right (respectively, bottom-left) corner, without leakage. Hence this Crossover gadget is unidirectional, which suffices by Remark~\ref{r:xover}. Each of the two upper platforms is covered with Zoomers, except for a three-tile-wide area, which moves around the platform as the Zoomers move. The only way Samus can get past an upper platform is to wait above it until the area devoid of Zoomers passes below her. Then she can quickly go down, follow the Zoomers toward the center of the gadget, and fall down onto the lower platform. This platform is traversed by two streams of Zoomers, going in opposite directions, timed in such a way that, if Samus comes from the upper-left (respectively, upper-right) platform, she is forced to go right (respectively, left) to run away from the Zoomers. Note that the three platforms have the same size and shape, hence each Zoomer's cyclic path has the same period. Moreover, if the platforms are large enough and the streams of Zoomers are long enough, there are enough Zoomers to kill Samus if she deviates from her intended behavior, even if she has 99 health points and she keeps jumping as soon as she reaches the bottom platform. Observe that Samus could get in the area just below each of the three platforms, if she sneaks into a lateral passageway while falling in Morph Ball mode. However, after she has done so, she is unable to exit, because she cannot jump in Morph Ball mode or morph in mid-air.

\begin{figure}[htbp]
\centering
\includegraphics[width=\linewidth]{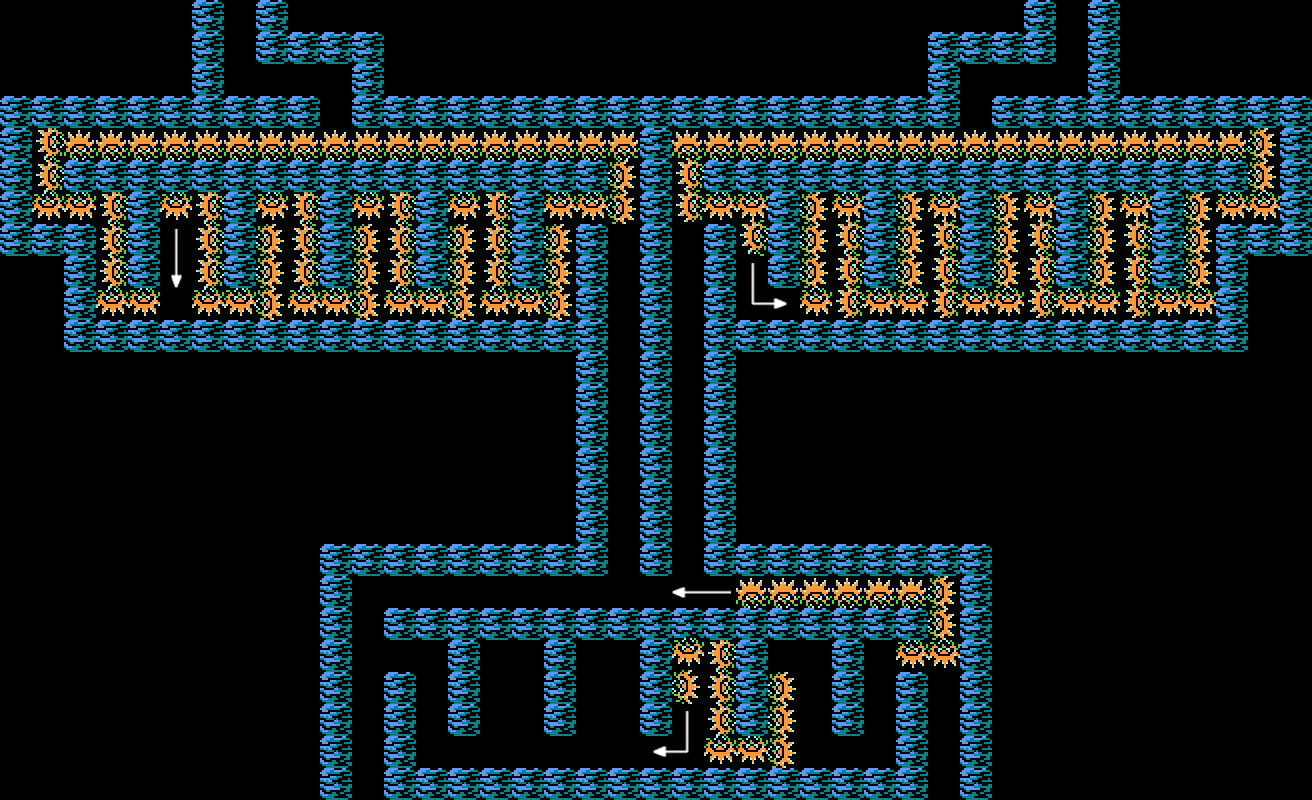}
\caption{Crossover gadget for Metroid}
\label{fig:metroid_xover}
\end{figure}

\end{proof}

\paragraph{Other Metroid games.}
Theorem~\ref{t:metroid} holds just as well for all other
Metroid games, as the Morph Ball and Zoomers are present also in the sequels. However, our gadgets may require some modifications, as in later games (e.g., Super Metroid) Samus can shoot downward and can morph in mid-air. Such modifications are straightforward if additional game elements are used, such as enemies called Rippers, which fly horizontally back and forth and cannot be killed by Samus' normal weapon. For the 3D Metroid games, one can
modify the 2D construction to have very small depth, so that the third
dimension becomes irrelevant.
}

\later{

\section{Pok\'emon}\label{s:pokemon}

Pok\'emon is a series of overhead role-playing games developed by Game
Freak for various handheld Nintendo consoles, including Game Boy, Game
Boy Color, Game Boy Advance, and the Nintendo DS. There are various
versions of Pok\'emon, but the core mechanics of the game are
invariant throughout. The player controls a young teenager and wanders
through the land capturing and training creatures called Pok\'emon
(short for Pocket Monsters). At any one time, the player may hold up
to six Pok\'emon in their team. Each Pok\'emon has a Level, which
indicates roughly how experienced it is, as well as battle stats:
attack, defense, speed, special attack, and special defense.  In
addition, each Pok\'emon has hit points (HP), which indicate how much
damage it can take before ``fainting'', as well as power points (PP)
for each move, which indicate how many times it may use that move.
Pok\'emon battles are two-player simultaneous move games, and in each
round the Pok\'emon with higher speed attacks first. The battle ends
when all of one trainer's Pok\'emon have fainted.

Pok\'emon is NP-hard because it has blocks
that the player can push according to the rules of Push-1
\cite{Push}. Therefore we immediately have

\begin{theorem}\label{t:pokemon1}
It is NP-hard to decide whether a given target location is reachable from a
given start location in generalized Pok\'emon.\hfill\qed
\end{theorem}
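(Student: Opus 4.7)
The plan is to give a direct reduction from Push-1, which is NP-complete by \cite{Push}. A Push-1 instance consists of a grid containing fixed walls, pushable unit blocks, a start square, and a target square; each move either walks the player into a free square or pushes a single adjacent block one tile in the direction of motion, provided the square beyond the block is free, and no block may pass through another block or wall. The key observation is that Pok\'emon's boulder puzzles (operated via the HM move Strength) obey exactly these rules: the player moves on a grid one tile at a time, and stepping toward a boulder pushes it one tile forward if and only if the destination tile is a free walkable square. Since this mechanic is faithful to Push-1, the reduction is essentially an encoding of the input grid into a Pok\'emon map.

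First, I would fix a concrete Pok\'emon game containing Strength boulders (e.g., Pok\'emon Red/Blue, and every mainline sequel) and describe the tile vocabulary used in the reduction: impassable walls realized by natural Pok\'emon obstacles (rocks, ledges oriented to block traversal, or dungeon walls), free tiles realized by ordinary walkable floor, the Push-1 start square identified with the player's spawn point, the target square identified with the level exit, and each Push-1 block realized by a Strength boulder placed on the corresponding tile. I would then argue that, given an instance of Push-1 of size $n$, the corresponding Pok\'emon map can be constructed in polynomial time and that a sequence of player moves reaches the target in the Pok\'emon map if and only if the corresponding sequence solves the Push-1 instance.

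To make this rigorous, I would also dispose of the incidental game features that might otherwise interfere with the encoding. In particular, I would assume the player has already obtained Strength and a Pok\'emon capable of using it (which can be enforced by placing the required items and HM at the very start, guarded in the usual way), and I would locate the entire gadget in an area without tall grass and without unavoidable trainers, so that random encounters and battles do not affect reachability. Since the player's team and inventory are chosen as part of the generalized input, this is unproblematic.

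The main obstacle is verifying that the Pok\'emon Strength mechanic is truly Push-1 and not, say, PushPush (sliding boulders until they hit something) or one-push-only (as in the original Zelda, which required the Lock modification of Figure~\ref{fig:zelda_lock}). This reduces to a careful reading of the game mechanics: in mainline Pok\'emon games a boulder moves exactly one tile per push, may be pushed again arbitrarily often from any adjacent side, and stops immediately upon entering its destination tile, which matches Push-1 exactly. Once this is confirmed, NP-hardness of reachability follows immediately from the NP-hardness of Push-1, yielding Theorem~\ref{t:pokemon1}.
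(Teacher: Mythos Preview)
Your proposal is correct and follows exactly the paper's own approach: the paper proves Theorem~\ref{t:pokemon1} in one sentence by observing that Pok\'emon's pushable blocks obey the Push-1 rules and invoking the NP-hardness of Push-1 from~\cite{Push}. Your write-up simply fleshes out the details (Strength boulders, disabling encounters, confirming the mechanic is Push-1 rather than PushPush or one-push-only), all of which the paper leaves implicit.
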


In this section, we give an alternate proof that Pok\'emon is NP-hard
using no elements of the game except enemy Trainers and the game's battle
mechanics (and thus no blocks).
Therefore we obtain the following stronger result:

\begin{theorem}\label{t:pokemon2}
It is NP-complete to decide whether a given target location is reachable
from a given start location in generalized Pok\'emon in which the only
overworld game elements are enemy Trainers.
\end{theorem}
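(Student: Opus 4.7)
The plan is to reduce from 3-SAT using the framework of Section~\ref{s:platform_framework}, implementing every gadget using only enemy Trainers together with passable and impassable overworld tiles. The mechanics to exploit are that a Trainer's body blocks movement until the Trainer is defeated, and that walking into a Trainer's line of sight forces an unavoidable battle, after which both the body-blockage and the sight line become inert. The player will be given a fixed, minimal team whose moves have tightly controlled total PP (or HP), tuned so that each forced battle deterministically consumes a known amount of this global resource.

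For the Variable gadget, two Trainers' sight lines are placed at the bottom of a split corridor so that the player must commit to exactly one literal path, followed by a ``one-way drop'' realized by aligned sight lines that prevent returning. For the Clause gadget, a single Trainer blocks the Check path and is reachable from three literal side-passages; entering from any literal triggers and wins the battle, removing the blockage, while the geometry of the side-passages prevents the player from leaking into another literal. The Check path then threads through each Clause Trainer's position. The Crossover gadget is the delicate one: two perpendicular corridors meet at a tiny junction, with Trainers arranged so that both straight-through traversals avoid every sight cone, but any attempt to turn at the junction forces a battle whose PP/HP cost would then be too high to finish later Clauses.

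Membership in NP is immediate: the relevant game state (player position, set of defeated Trainers, and current HP/PP of each player Pokemon) is polynomial in size, and since each Trainer is fought at most once, the witnessing sequence of moves has polynomial length and can be verified in polynomial time. Combined with the reduction, this gives NP-completeness.

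The main obstacle will be the Crossover gadget. With no pushable blocks, switches, ice, or environmental hazards available, every local constraint must be enforced using only Trainer bodies (which disappear once beaten) and sight lines (which become inert once the corresponding Trainer is beaten). Ensuring that both crossing paths can be independently traversed, with no leakage in either direction even after both have been used, while keeping each Trainer's level and PP cost consistent with the global budget inherited from the Variable and Clause gadgets, is the hard design problem; Remark~\ref{r:xover2} is likely to be essential, since it allows some post-traversal leakage and thereby reduces the number of invariants the gadget must maintain.
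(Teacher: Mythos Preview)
Your proposal rests on two ideas that do not hold up.

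First, you assume that a defeated Trainer's body stops blocking movement. It does not: in the games (and in the paper's model), a Trainer who has been battled stays put as an impassable NPC; only the sight line becomes inert. So your Clause gadget, in which ``entering from any literal triggers and wins the battle, removing the blockage,'' does not actually unblock the Check path. The paper exploits precisely the opposite fact: it distinguishes \emph{luring} a Trainer (walking into the sight line so the Trainer walks toward you and ends up in a new location) from \emph{talking} to a Trainer from the side (the Trainer is defeated in place). This lure-versus-talk dichotomy is what drives the Variable gadget, and the permanence of defeated bodies is what lets weak Trainers occlude strong Trainers' lines of sight in the Clause and Single-use-path gadgets.

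Second, and more seriously, your Crossover relies on a \emph{global} PP/HP budget: straight traversals are free, turning costs one battle, and the budget is tuned so tightly that a single extra battle is fatal. But you have not argued that cheating at a crossover cannot \emph{save} a battle elsewhere. Concretely, leaking from a literal path of variable $x_i$ onto a literal path of variable $x_j$ with $j>i$ routes the player directly toward Variable gadget $j{+}1$, skipping the gadgets for $x_{i+1},\dots,x_j$ and all the ``one-way drop'' battles you placed between them; the single crossover battle can easily be paid for out of those savings. Making a budget argument sound would require showing that \emph{every} route to the Finish, cheating or not, uses at least the intended number of battles --- essentially a min-cut argument over the whole construction --- and you have not done this. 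The paper sidesteps the issue entirely by introducing \emph{strong Trainers} (against whom the player provably always loses) alongside \emph{weak Trainers} (whom the player beats at zero cost). Strong Trainers are absolute local barriers, so every constraint is enforced locally and no global accounting is needed; the Crossover is then built from four copies of a ``single-use path'' sub-gadget in which a weak Trainer's body initially occludes a strong Trainer's sight line.

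In short, the missing idea is the weak/strong Trainer dichotomy together with the lure/talk mechanic; without it your Crossover is unproven and your Clause gadget is based on an incorrect mechanic.
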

\begin{proof}
We briefly describe the mechanism for enemy Trainer encounters in the
Pok\'emon games. Each enemy Trainer has an orientation (which
direction they are facing), a range of sight, and a set of
Pok\'emon. If the player walks into the line of sight of the Trainer
(and if such a line of sight is not occluded by some other element, e.g., another Trainer),
the player is forced to stop, the Trainer walks toward the player
until they are adjacent, and then battle ensues.  Additionally, if
the player approaches a Trainer from outside the Trainer's line of
sight (i.e., from behind or the sides), the player may talk to the Trainer,
activating the battle.  If the player wins the battle, the Trainer will
not move or attack again.

We prove NP-hardness by using the framework in Section~\ref{s:platform_framework}. Start and Finish gadgets are trivial, hence it suffices to implement the Variable, Clause, and Crossover gadgets.

%
In our implementations, we use three kinds of objects. Walls,
represented by dark grey blocks, cannot be occupied or walked
through.
Trainers' lines of sight are indicated by translucent rectangles.
We have two types of Trainers. Weak Trainers, represented by red rectangles,
are Trainers whom the player can defeat with certainty without expending
any effort, i.e., without consuming PP or taking damage.  Strong
Trainers, represented by blue rectangles, are Trainers against whom the player
will always lose.

We can implement weak and strong Trainers as follows
(a construction due to Istvan Chung).
Weak Trainers each hold a Level 100 Electrode with maximum Speed
and equipped with only the Self Destruct move.
Strong Trainers each hold two Snorlaxes, with Speed of 30.
The player has
no items, and only one Pok\'emon in his team. For Generation I and II games
(Red/Blue/Yellow and Gold/Silver/Crystal versions respectively), the player holds
a Gastly which has learned Self Destruct using TM36, and its PP for its other moves
have all been expended, so it can only use Self Destruct in battle.
When the player encounters a weak Trainer, the enemy Electrode will move first and
use Self Destruct, which deals no damage to Gastly because Self Destruct is a Normal type attack
and Gastly is Ghost type, so the weak Trainer immediately loses.
When the player encounters a strong Trainer, Gastly moves first and uses Self Destruct, causing
the player to lose (even if it defeats the enemy Snorlax, the opponent holds another one).
This implementation only works in Generations I and II because TM36 exists only in Generation I
and the Time Capsule feature in Generation II allows a Gastly with Self Destruct to be traded
from Generation I to Generation II. In Generations III, IV, and V, Gastly can be replaced by Duskull,
which is allowed to learn the move Memento, which serves the same purpose as Self Destruct, via
breeding.

We now implement each gadget. The Variable gadget, illustrated in Figure~\ref{fig:pokemon_variable},
must force the player to choose either an $a$-to-$b$ traversal
or an $a$-to-$c$ traversal. We show that this is the case.
The player enters through $a$. If the player wants to
exit through $b$, they walk into the Trainer's line of sight, luring
the Trainer down and opening up the passage to $b$, but closing the
passage to $c$. On the other hand, if the player wants to exit through
$c$, they walk up to the Trainer and talk to him, disabling the
Trainer, so that they may walk around and exit through $c$.

\begin{figure}[htbp]
\centering
\includegraphics[scale=0.5]{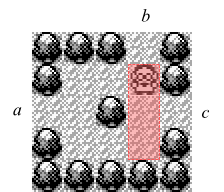}
\caption{Variable gadget for Pok\'emon}
\label{fig:pokemon_variable}
\end{figure}

The Clause gadget, depicted in Figure~\ref{fig:pokemon_clause}, is unlocked whenever the player enters from one of the top paths and talks to one of the three weak Trainers, disabling him. When the Check path is traversed, every weak guard who has not been disabled is lured down by the player, thus getting out of the line of sight of the rightmost strong Trainer. Eventually the player must cross that line of sight, and is reached by the strong Trainer if and only if no weak Trainer blocks him, which happens if and only if the Clause gadget is still locked. The leftmost strong Trainer prevents leakage from the Check path to the literals, in case the leftmost weak Trainer has been lured down.

\begin{figure}[htbp]
\centering
\includegraphics[scale=0.5]{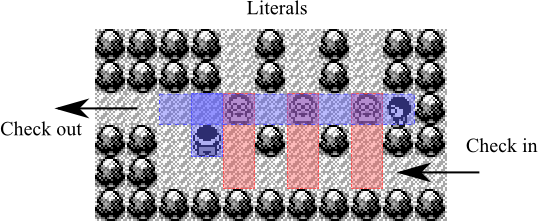}
\caption{Clause gadget for Pok\'emon}
\label{fig:pokemon_clause}
\end{figure}

Before presenting the Crossover gadget, we introduce the \emph{Single-use path}, a path that can only be traversed once, and only in one direction. This is implemented by the gadget in
Figure~\ref{fig:pokemon_nr}, which can be traversed only once, from $a$ to $b$.  Clearly, the player cannot enter via
$b$, because that lures the weak Trainer to block the passage. Suppose
the player enters through $a$. They can safely walk to $b$, because the
weak Trainer is blocking the bottom strong Trainer's line of
sight. However, to reach $b$, the player must lure the weak Trainer
out of the line of sight of the strong Trainer, hence the player may
never return in the reverse direction. Also, there is a strong Trainer
above the weak Trainer, preventing the player from disabling the weak Trainer
by entering from $a$.

\begin{figure}[htbp]
\centering
\includegraphics[scale=0.5]{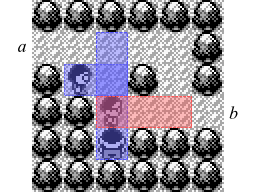}
\caption{Single-use path for Pok\'emon}
\label{fig:pokemon_nr}
\end{figure}

Finally, the Crossover gadget is shown in Figure~\ref{fig:pokemon_xover}. Each of its two passages, $x$-to-$x'$ and $y$-to-$y'$, is unidirectional and may be traversed only once, which suffices due to Remarks~\ref{r:xover} and~\ref{r:xover2}.

Because the Crossover gadget is symmetric, we assume, without loss of generality, that the player attempts to traverse the $x$-to-$x'$ passage first. Upon entering from $x$, they go down and disable the bottom weak Trainer, then go back up and proceed to the right. Observe that the top-left part of the gadget is a ``crossroads'' made of four isometric copies of the Single-use path of Figure~\ref{fig:pokemon_nr}. Upon reaching it, the player is forced to go either down or right. If they go down, they eventually get stuck, because the crossroads is now unreachable due to the just traversed Singe-use path, and the $y'$ exit is unreachable too, because the top weak Trainer is lured all the way down to block the player, as soon as they attempt to exit. On the other hand, if the player proceeds to the right upon entering the crossroads, they can safely reach the $x'$ exit, because the leftmost weak Trainer has been previously disabled. Note that, by doing so, the player incidentally crosses the line of sight of the top weak Trainer, luring him down and disabling him.

Now, if the player wishes to traverse the $y$-to-$y'$ passage, they must take the vertical Single-use paths of the crossroads, because the other two have already been traversed. At this point, the player has no choice but to exit from $y'$, which is safely reachable because the top weak Trainer has already been disabled during the first traversal of the gadget, and is not blocking the way out.

\begin{figure}[htbp]
\centering
\includegraphics[scale=0.5]{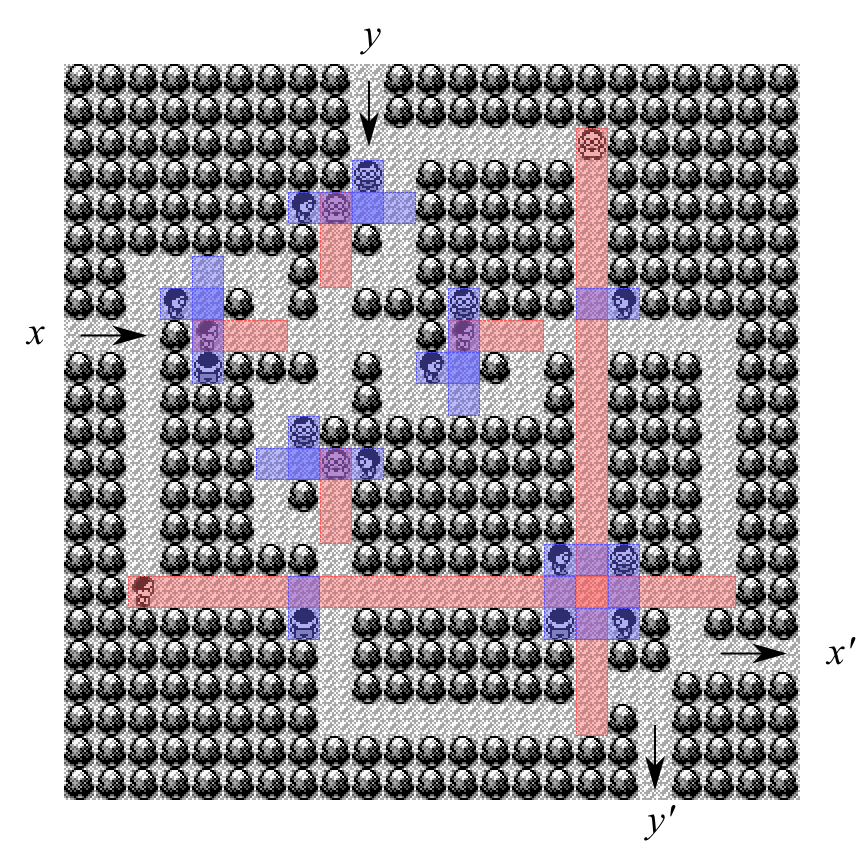}
\caption{Crossover gadget for Pok\'emon}
\label{fig:pokemon_xover}
\end{figure}

\paragraph{NP.}
To show that Pok\'emon with only enemy Trainers is in NP, note that
once a Trainer has been battled, they become inert. Moreover, each
actual battle is bounded in length by a constant, because eventually all
Pok\'emon must expend all their PP for their moves and use Struggle,
a standard physical attack which hurts the opponent as well as the user.
Moreover, each Pok\'emon only has four different attacks, so the branching
factor of the game tree for each battle is constant (and hence the size
of the game tree is also constant).
Therefore, the player may
nondeterministically guess a solution path through the overworld, and
for each battle compute the winning strategy in constant time.
\end{proof}

\paragraph{Other Pok\'emon games.}
Theorem~\ref{t:pokemon2} actually holds for all
Pok\'emon role-playing games, because the only actual Pok\'emon used in
the construction were from the first generation, which is present in
all the games, and enemy Trainers are of course present in all the
games as well.

}



\section*{Acknowledgments}

This work was initiated at the 25th Bellairs Winter Workshop on Computational 
Geometry, co-organized by Erik Demaine and Godfried Toussaint, held on
February 6--12, 2010, in Holetown, Barbados.  We thank the other participants 
of that workshop---Brad Ballinger, Nadia Benbernou, Prosenjit Bose,
David Charlton, S\'ebastien Collette, Mirela Damian, Martin Demaine,
Karim Dou\"{\i}eb, Vida Dujmovi\'c, Robin Flatland, Ferran Hurtado,
John Iacono, Krishnam Raju Jampani, Stefan Langerman, Anna Lubiw, Pat Morin,
Vera Sacrist\'an, Diane Souvaine, and Ryuhei Uehara---for providing a
stimulating research environment.  In particular, Nadia Benbernou was
involved in initial discussions of Super Mario Bros.

We thank readers Bob Beals, Curtis Bright, Istvan Chung,
Peter Schmidt-Nielsen, Patrick Xia, and the anonymous referees for helpful
comments and corrections, and for ``beta-testing'' our constructions.

We also thank The Spriters Resource~\cite{spriters}, VideoGameSprites~\cite{vgsprites},
NES Maps~\cite{nesmaps}, and SNES Maps~\cite{snesmaps} for serving as
indispensable tools for providing easy and comprehensive access to the sprites used
in our figures.

\iffull
Greg Aloupis would like to thank James Lamarche for countless hours of
cooperative research on Nintendo games over the years.
\fi

Finally, of course, we thank Nintendo and the associated developers
for bringing these timeless classics to the~world.

\appendix
\magicappendix

\end{document}